%
\documentclass[11pt,a4paper,twoside,onecolumn,titlepage,fleqn]{article}
\usepackage[left=2cm, right=2cm, top=2cm, bottom=2cm]{geometry}
\usepackage[english]{babel}
\usepackage[utf8]{inputenc}
\usepackage[T1]{fontenc}
\usepackage{amsmath,amsfonts,amssymb,amsthm}
\usepackage{hyperref}
%
%
\usepackage{caption}
\usepackage[all,pdf,rotate]{xy}
\usepackage{tikz}
\usetikzlibrary{shadings}
\usepgflibrary{arrows.meta}
%
%
\newtheorem{definition}{Definition}
\newtheorem{theorem}{Theorem}
\newtheorem{lemma}{Lemma}
\newtheorem{remark}{Remark}
\newcommand{\ie}{\textit{i.e.}}
\newcommand{\eg}{\textit{e.g.}}
\newcommand{\ep}{e_P}
\newcommand{\el}{e_L}
\newcommand{\eal}{e_{AL}}
\newcommand{\rset}{\mathbb{R}}
\newcommand{\pru}{P\rset^1}
\newcommand{\prd}{P\rset^2}

\newcommand{\F}{\mathfrak{F}}
\newcommand{\cel}{\mathcal{C}}
\newcommand{\nm}{\mathcal{N}^{-}}
\newcommand{\mem}{\mathcal{E}}
\newcommand{\wl}{\mathcal{W}}
\newcommand{\wlu}{\mathcal{V}}
\newcommand{\tem}{\widetilde{\mathcal{E}}}
\newcommand{\wlt}{\widetilde{\mathcal{W}}}
\newcommand{\hem}{\widehat{\mathcal{E}}}
\newcommand{\wlh}{\widehat{\mathcal{W}}}
\newcommand{\bem}{\overline{\mathcal{E}}}
\newcommand{\wlb}{\overline{\mathcal{W}}}
\newcommand{\sem}{\mathcal{S}}
\newcommand{\sembb}{\mathbb{S}}
\newcommand{\sembul}{\mathcal{S}^\bullet}
\newcommand{\semast}{\mathcal{S}^\ast}
\newcommand{\semdag}{\mathcal{S}^\dagger}
\newcommand{\use}{\mathcal{U}}
\newcommand{\talpha}{\tilde{\alpha}}
\newcommand{\tbeta}{\tilde{\beta}}
\newcommand{\balpha}{\bar{\alpha}}
\newcommand{\bbeta}{\bar{\beta}}
\newcommand{\halpha}{\hat{\alpha}}
\newcommand{\hbeta}{\hat{\beta}}
\newcommand{\te}{\widetilde{E}}
\newcommand{\he}{\widehat{E}}
\newcommand{\be}{\overline{E}}
\newcommand{\tu}{\widetilde{U}}
\newcommand{\bu}{\overline{U}}
\newcommand{\hu}{\widehat{U}}
\newcommand{\ebul}{E^\bullet}
\newcommand{\east}{E^\ast}
\newcommand{\ediam}{E^\diamond}
\newcommand{\ediamd}{E_\diamond}
\newcommand{\esharp}{E^\sharp}

\newcommand{\edag}{E^\dagger}
\newcommand{\esd}{E_{''}}
\newcommand{\bediam}{\overline{E}^\diamond}
\newcommand{\bediamd}{\overline{E}_\diamond}

\newcommand{\bebul}{\overline{E}^\bullet}
\newcommand{\besharp}{\overline{E}^\sharp}

\newcommand{\beast}{\be^\ast}
\newcommand{\besd}{\be_{''}}
\newcommand{\hebul}{\widehat{E}^\bullet}
\newcommand{\heast}{\widehat{E}^\ast}
\newcommand{\hedag}{\widehat{E}^\dagger}
\newcommand{\hesharp}{\widehat{E}^\sharp}
\newcommand{\hediam}{\widehat{E}^\diamond}
\newcommand{\hediamd}{\widehat{E}_\diamond}
\newcommand{\hesd}{\widehat{E}_{''}}
\newcommand{\tebul}{\widetilde{E}^\bullet}
\newcommand{\tesharp}{\widetilde{E}^\sharp}
\newcommand{\tedag}{\widetilde{E}^\dagger}

\newcommand{\teast}{\widetilde{E}^\ast}
\newcommand{\tesd}{\widetilde{E}_{''}}
\newcommand{\tta}{\tilde{\tau}}
\newcommand{\hta}{\hat{\tau}}
\newcommand{\bta}{\bar{\tau}}
\newcommand{\rta}{\mathring{\tau}}
\newcommand{\tabul}{{\tau}^\bullet}
\newcommand{\tasha}{{\tau}^\sharp}
\newcommand{\ttabul}{\tilde{\tau}^\bullet}
\newcommand{\ttasha}{\tilde{\tau}^\sharp}

\newcommand{\btabul}{\bar{\tau}^\bullet}

\newcommand{\ttaast}{\tilde{\tau}^\ast}
\newcommand{\htaast}{\hat{\tau}^\ast}

\newcommand{\tadiam}{{\tau}^\diamond}
\newcommand{\btadiam}{{\bta}^\diamond}
\newcommand{\btadiamd}{{\bta}_\diamond}
\newcommand{\htadiamd}{{\hta}_\diamond}
\newcommand{\btadag}{{\bta}^\dagger}
\newcommand{\btasd}{{\bta}_{''}}

\newcommand{\htasd}{{\hta}_{''}}

\newcommand{\ttadiam}{{\tta}^\diamond}
\newcommand{\ttadag}{{\tta}^\dagger}
\newcommand{\tediam}{{\te}^\diamond}
\newcommand{\tediamd}{{\te}_\diamond}

\newcommand{\memch}{\mem{ch}}
\newcommand{\bemch}{\bem{ch}}
\newcommand{\temch}{\tem{ch}}
\newcommand{\hemch}{\hem{ch}}
\newcommand{\cross}[4]{[\,#1\,:\,#2\,|\,#3\,:\,#4\,]}
\newcommand{\ratio}[3]{[\,#1\,:\,#2\,|\,#3\,]}
\newcommand{\txtred}[1]{\textcolor{red}{#1}}

\newcommand{\mobf}{\mathfrak{M}}
\newcommand{\bmobf}{\overline{\mathfrak{M}}}
\newcommand{\tmobf}{\widetilde{\mathfrak{M}}}

\newcommand{\tritor}{\mathcal{T}^3}
\newcommand{\fourtor}{\mathcal{T}^4}

%
%
\begin{document}
\author{Jacques L. Rubin
	\bigskip \\
	Universit{\'e} de Nice--Sophia Antipolis\\
	Institut de Physique de Nice -- UMR7010-UNS-CNRS\\
	Site Sophia Antipolis\\
	1361 route des lucioles, 06560 Valbonne, France
	\bigskip \\
	\href{mailto:jacques.rubin@inphyni.cnrs.fr}{\texttt{jacques.rubin@inphyni.cnrs.fr}}}
\title{Relativistic Localizing Processes Bespeak an Inevitable Projective Geometry of Spacetime}
\date{\today}
\maketitle
\begin{abstract}
Surprisingly, the issue of events localization in spacetime is poorly understood and \textit{a fortiori} realized even in the context of Einstein's relativity. Accordingly, a comparison between observational data and  theoretical expectations might then be strongly compromised. In the present paper, we give the principles of relativistic localizing systems so as to bypass this issue. Such systems will allow to locate users endowed with receivers and, in addition, to localize any spacetime event. These localizing systems are made up of relativistic auto-locating positioning sub-systems supplemented by an extra satellite. They indicate that spacetime must be supplied everywhere with an unexpected local four dimensional projective structure besides the well-known three dimensional relativistic projective one. As a result, the spacetime manifold can be seen as a generalized Cartan space modeled on a four dimensional real projective space, \ie, a spacetime with both a local four dimensional projective structure and a compatible (pseudo-)Riemannian structure. Localization protocols are presented in details, while possible applications to astrophysics are also considered.
\end{abstract}
\section{Introduction}
The general principles of the \textit{relativistic localizing systems} have been defined in a previous paper \cite{Rubin2015} with just a few details on the projective underlying structure provided by these localizing systems. The latter are based on the so-called \textit{relativistic positioning systems} \cite{Coll:2000,bahder2001navigation,Blago:2002,Rovelli:2002gps,Coll2003jsrs,coll2006two,Coll2006aipconf,Coll2012esawrk}. 
The protocols of relativistic positioning are a priori rather simple. For instance, in a four dimensional spacetime, we can consider four emitting satellites and users with their respective (time-like) worldlines. The four emitters broadcast ``emission coordinates'' which are no more, no less than time stamps generated by on-board clocks and encoded within EM signals propagating in spacetime. Then, a so-called four-dimensional \textit{emission grid} can be constructed from this relativistic positioning system and  its system of emission coordinates. This grid can be eventually superposed to a \textit{grid of reference} supplied by a `system of reference'  (\eg, the well-known WGS84 system). And then, from this superposition, the positions of the users can be deduced in the given system of reference. More precisely, in relativistic positioning systems, the emitters not only broadcast their own time stamps, but also the time stamps they receive from the others. This process of echoes undergone by the time stamps enables users to construct the four-dimensional emission grid because they can then deduce the spacetime positions of the four emitters. And then, because the positions of the four emitters can be known also in a given system of reference the users can deduce their own positions in this system of reference from their positions in the emission grid.
\par
Here, we focus on the relativistic localizing systems which are systems incorporating relativistic positioning sub-systems. We show how \textit{causal axiomatics} \cite{EhlPiSch:72,Woodhouse73,Kingal76,Malament77} 
and particular projective structures (actually, compasses) homeomorphic to $\pru$ and $\prd$  attached all along the worldlines of the emitters of the localizing systems are sufficient to justify a four-dimensional projective structure of the spacetime; in addition to the well-known three-dimensional projective structure.
\par
Beforehand, to proceed in the difficult and delicate description of the relativistic localizing systems, we first need to define as clearly as possible the terminology and the different conventions and notations.
\section{Notations and conventions}
We consider a constellation of satellites called `\textit{emitters}' which typically broadcast numerical values (called `\textit{time stamps}') generated, for instance, by embarked onboard clocks.
\begin{enumerate}
\item The ``\textit{main}'' emitters are denoted by $\mem$, $\bem$, $\tem$ and $\hem$ with their respective worldlines $\wl$,  $\wlb$, $\wlt$ and $\wlh$. The ``\textit{ancillary}'' emitter $\sem$ and the ``\textit{user}'' $\use$ have their worldlines denoted respectively by  $\wl^\sem$ and $\wlu$.
\item The \textit{main} emitters constitute the \textit{relativistic positioning system}.
\item The \textit{ancillary} emitter $\sem$ and the main emitters constitute the  \textit{relativistic localizing systems}.
\item The event to be localized is always denoted by the small capital letter $e$.
\item \label{physid} The user $\use$ collects along its worldline all the data ---in particular, the time stamps--- from which the localization of the event $e$ is deduced. Among these data, there are those for identifying physically the event $e$ such as, for instance, its shape, its spectrum, etc., and which are obtained from apparatus making physical analyses embarked onboard each mean emitter.
\item Any explicit event will be marked by symbols like ``$\,\bullet\,$'', ``$\,\ast\,$'', ``$\tilde{\,\,\,\,\,}$'', ``$\hat{\,\,\,\,\,}$'',  etc., or also by small capital letters such as ``$\,p\,$'', ``$\,\ell\,$'', etc. Non-marked or numbered events will refer to general or generic, unspecified events. For instance, $E^\bullet$ will be a specified event while $E$ or $E_2$ will be generic, unspecified events.
\item The generic events $E$, $\be$, $\te$, $\he$, $S$ and $U$ belong respectively to the worldlines $\wl$,  $\wlb$, $\wlt$, $\wlh$, $\wl^\sem$ and $\wlu$.
\item The \textit{time stamps} will be denoted by the greek letters ``$\,\tau\,$'', ``$\,\bar{\tau}\,$'', ``$\,\tilde{\tau}\,$'', ``$\,\hat{\tau}\,$'' and ``$\,\mathring{\tau}\,$.'' The first four previous time stamps are ``generated'' and broadcast respectively by the main emitters $\mem$, $\bem$, $\tem$ and $\hem$, and the last one is ``generated'' and broadcast by the ancillary emitter $\sem$. The four main emitters not only generate their own time stamps but transmit also the time stamps they receive. These main emitters constitute the various \textit{autonomous auto-locating} relativistic positioning systems from which the relativistic localizing systems presented further are constructed.
\item Two classes of \textit{time stamps} are considered: 
\begin{itemize}
\item The time stamps which are generated and then broadcast by the emitters at given events on their worldlines.
Then, we agree to mark the corresponding time stamps like the given events. For instance, if an emitter generates and broadcasts a time stamp at the specified event $\tebul$ or at the generic event $E_1$\,, then the respective time stamps will be denoted by $\ttabul$ or $\tau_1$\,.
\smallskip 
\item The time stamps which are the \textit{emission} (or \textit{time}) coordinates of an event $K$ ---specified or not--- will be denoted by ``$\,\tau_{K}$'',
``$\,\bar{\tau}_{K}$'', ``$\,\tilde{\tau}_{K}$'', ``$\,\hat{\tau}_{K}$'' and ``$\,\mathring{\tau}_{K}$''.
\end{itemize}
\item The ancillary emitter $\sem$ generates and broadcasts its own time stamp $\rta$ and it broadcasts also its time (emission) coordinates provided by the relativistic positioning system. In other words, it is also a particular user of the relativistic positioning system like the user $\use$. Contrarily to the ancillary emitter, the user does not necessarily broadcast its emission coordinates.
\item Projective frames at events $E$ will be denoted by $\F_E$\,. There are sets of ``canonical projective points $[\ldots]_E$'' which are the following:
\begin{itemize}
\item $\F_E\equiv\Big\{[0]_E,[1]_E,[\infty]_E \Big\}$ for projective frames of the real projective line $\pru$\,, and
\item $\F_E\equiv\Big\{[0,0]_E,[1,1]_E,[0,\infty]_E,[\infty,0]_E \Big\}$ for projective frames of the 2-dimensional projective space $\prd$\,.
\end{itemize}
The subscripts will be canceled out if there are no ambiguities on the referring event.
\item The celestial circles/spheres are denoted by $\cel$, and then, $\cel_K$ is the celestial circle/hemisphere at the event~$K$. The celestial circles are invoked in the definition of the ``echoing systems'' of relativistic localizing systems in ($2+1$)-dimensional spacetimes presented in Section~\ref{21rls}\,. Considering relativistic localizing systems and  their corresponding echoing systems (Section~\ref{31rls}) in  ($3+1$)-dimensional spacetimes, then 2-dimensional projective spaces $\prd$ are also considered. But, contrarily to the relativistic localizing systems in ($2+1$)-dimensional spacetimes, the 2-dimensional real projective spaces $\prd$ cannot be immersed in  spheres $S^2$ (or $\rset^2$). Then, as well-known from the cell decomposition of  $\prd=\rset^2\cup \pru$, the Euclidean space $\rset^2$ is identified in a standard way with a hemisphere of $S^2$ while $\pru$ is identified with half of the equatorial boundary (see for instance \cite[p.10--14]{Seifert1980} for details).
\item We denote by  (see \cite[Def.~3.1, p.R16]{Garcia-Parrado2005} and \cite{Kronheimer1967})\footnote{Roughly speaking, let $x$ and $y$ be two events in spacetime. Then, 1) $x\prec y$ means that $y$ is in the future null cone of $x$ or in its interior, 2) $x\ll y$ means that $y$ is in the interior of the future null cone of $x$, and 3) $x\to y$ means that  $x$ and $y$ are joined by a null geodesic starting from $x$ to $y$. The relation of order $\to$ is \textit{reflexive} and it is also called the \textit{horismotic} relation (see for instance \cite[p.R9]{Garcia-Parrado2005}).}
\begin{itemize}
\item ``\,$\prec$\,'' the \textit{causal} order,
\item ``\,$\ll$\,'' the \textit{chronological} order, and 
\item ``\,$\to$\,'' the \textit{horismos} (or \textit{horismotic} relation/order).
\end{itemize} 
\item We call ``\textit{emission} (or \textit{positioning}) \textit{grid} $\rset^n_P$\,'' the Euclidean space $\rset^n$ of positioning, and ``\textit{localization} (or \textit{quadrometric/pentametric}) \textit{grid} $\rset^m_L$\,'' and ``\textit{anisotropic localization} (or \textit{quadrometric/pentametric}) \textit{grid} $\rset^m_{AL}$\,,'' two different Euclidean spaces $\rset^{m-1}\times\rset^*$ ascribed to two different, particular sets of time coordinates used for the localization.
\item The acronyms RPS and RLS mean respectively `\textit{Relativistic Positioning System}' and `\textit{Relativistic Localizing System}.'
\end{enumerate}
\section{RLSs in (1+1)-dimensional spacetimes }
In this (1+1)-dimensional case, there are two main emitters $\mem$ and $\bem$ constituting the RPS, and with the ancillary emitter $\sem$ they constitute the RLS. We first give the causal structures of the RPS and the associated RLS. In the Figures~\ref{caus21rps} and \ref{caus21rls} below, and also, in all other subsequent figures representing a causal structure, the arrows represent always the horismotic relation between two events.
\subsection{The causal structure of the RPS}
We have the following causal structure (see Figure~\ref{caus21rps} and Table~\ref{evbroa11}) for the auto-locating RPS from which the positioning of the user $\use$ is realized.
\par
\begin{figure}[ht]
\caption{\label{caus21rps}The causal structure of the RPS.}
$$\xy\xymatrix @!=.8em {
& U_r &
\\
E' \ar[ru] && \be' \ar[lu]
\\
&&\\
E'' \ar[uurr] && \be''  \ar[lluu] }
\endxy$$
\end{figure}
\begin{table}[ht]
\centering
\caption{\label{evbroa11}The events and their broadcast time stamps in the RPS.}
\begin{tabular}{ccc}
\hline \hline
Event & \parbox[c]{7em}{\centering broadcasts\\ time stamp(s)} & received at\\ 
\hline 
$E''$& $\tau''$ & $\be'$\\ 
$\be''$& $\bta''$ & $E'$\\
$E'$& $(\tau_{E'}=\tau',\bta'')$ & $U_r$\\
$\be'$& $(\tau'',\bta'=\bta_{\be'})$ & $U_r$\\
\hline 
\end{tabular}
\end{table} 
Then, the position of the user at the event $U_r$ in the \textit{emission grid} $\rset^2_P$ is: $U_r\equiv(\tau',\bta')$. Also, the user can know from the auto-locating process the positions of the two emitters: $E'\equiv(\tau',\bta'')$ and $\be'\equiv(\tau'',\bta')$. Moreover, ephemerides are regularly uploaded onboard the main emitters which broadcast with their time stamps these ephemerides to the users. From these data, \ie, ephemerides and positions of the main emitters, the users can deduce  their own positions with respect to a given  system of reference (\eg, the terrestrial frame of WGS84). This is the core and the important interest of the auto-locating positioning systems to immediately furnish the positions of the users with respect to a given system of reference.
\subsection{The causal structure of the RLS}
In this very specific (1+1)-dimensional case, the localized event $e$ is necessarily the intersection point of two null geodesics. The causal structure is the following (Figure~\ref{caus21rls} and Table~\ref{tabcaus21rls}):
\newpage 
\begin{figure}[ht]
\caption{\label{caus21rls}The causal structure of the RLS.}
$$\xymatrix  @!=1em {
\bu_r & \ll & U_r
\\
\be_p \ar[u] && E_p \ar[u]
\\
&e \ar[lu] \ar[ru]&
\\
\bebul \ar [ru] && \ebul \ar[lu]
}$$
\end{figure}
\begin{table}[ht]
\centering
\caption{\label{tabcaus21rls}The events and their broadcast time stamps in the RLS.}
\begin{tabular}{ccc}
\hline \hline
Event & \parbox[c]{7em}{\centering broadcasts\\ time stamp(s)} & received at\\ 
\hline 
$\bebul$& $\btabul$& $E_p$\\ 
$\ebul$& $\tabul$ & $\be_p$\\ 
$\be_p$& $(\tabul,\bta_p=\bta_{\be_p})$ & $\bu_r$\\
$E_p$& $(\tau_{E_p}=\tau_p,\btabul)$& $U_r$\\
\hline 
\end{tabular}
\end{table}
Then, the protocol of localization  gives the following time coordinates for $e\equiv(\tau_e,\bta_e)$ in the \textit{localization grid} $\rset^2_L$: $\tau_e=\tau_{\ebul}=\tabul$ and $\bta_e=\bta_{\bebul}=\btabul$.
\begin{remark}\label{match}
It matters to notice that the two events of reception $U_r$ and $\bu_r$ are matched by the user on the basis of a crucial identification of the physical data transmitted by the two main emitters (see convention \ref{physid}) and which allow to explicitly identify the physical occurrence of an event $e$. And then, the whole of different time stamps collected at these two events can be therefore considered by the user as those needed to make the localization of $e$.
\end{remark}
\subsection{Consistency between the positioning and localizing protocols -- Identification}
\begin{definition}\label{consist}
\textbf{Consistency} -- We say that the localizing and the positioning protocols or systems are ``\emph{con\-sis\-tent}'' if and only if the time coordinates $(\tau_K,\bta_K,\ldots)$ ascribed to each event $K$ belonging to an emitter's worldline and provided by the localization (\emph{resp.} positioning) system are the same as those provided by the positioning (\emph{resp.} localization) system.
\end{definition}
\begin{remark}
\label{equivlocpos}
In this (1+1)-dimensional case, when we \emph{identify} the time stamps $\tau_e$ and $\bta_e$ with, respectively, $\tau_{\ebul}$ and $\bta_{\bebul}$, then the localization is equivalent to the positioning. This leads to the  general Definition~\ref{iden} below.
\end{remark}
\begin{remark}
\label{noconsist}
The consistency between the localizing and the positioning protocols is not an absolute necessity. We can obtain different time coordinates for the same event $K$ belonging to an emitter's worldline from the positioning system or the localizing one if we change the time stamps ascriptions in the protocols of localization presented further. Then, we can choose arbitrarily the emission grid or the grid of localization to position the event $K$, and then, we can refer to the preferred grid for the time coordinates ascribed to any other event, positioned or localized. In other words, because the  systems of localization include implicitly by construction derived positioning systems, the latter can differ from the initial ones. In this case, the consistency is not satisfied but we can still refer the time coordinates of any event with respect to the localization grids rather than to the emission grids. The only advantage of the consistency is that once the events are localized, then they time coordinates can be ascribed indifferently to any of the two grids.
\end{remark}
\begin{definition}\label{iden}
Let a localized event $e$ and an event $K$ on the worldlines of a \emph{main} emitter or of the ancillary emitter be such that $e\to K$ or $K\to e$ or $e=K$. Then,  we call `\emph{identification}' in the emission (position) grid the ascription of an emission coordinate of $e$ to an emission coordinate of $K$.
\end{definition}
\section{RLSs in (2+1)-dimensional spacetimes\label{21rls}}
In this case, there are three main emitters $\mem$, $\bem$ and $\tem$ constituting the auto-locating RPS and, again, an ancillary emitter $\sem$ with which they constitute the RLS.
\subsection{The causal structure of the RPS}
This causal structure is described in Figure~\ref{cuasstr21} and Table~\ref{tab21}.
\par 
\begin{figure}[ht]
$$\xy\xymatrix @!=1em {
&& U_r &&\\
E' \ar[rru] && \be' \ar[u]&& \te' \ar[llu]\\
&&&&\\
E'' \ar[rruu]&& {}_{\displaystyle\be''}\ar[rruu]  && \ar[lluu]\te''\\
**[r]E'''\ar[rrrruuu] && {}_{\displaystyle\be'''}\ar[lluuu]&& \ar[lllluuu]\te'''\\ }
\endxy$$
\caption{\label{cuasstr21}The causal structure of the RPS in a (2+1)-dimensional spacetime.}
\end{figure}
%
\begin{table}[ht]
\centering
\caption{\label{tab21}The events and their broadcast time stamps.}
\begin{tabular}{ccc}
\hline \hline
Event & \parbox[c]{7em}{\centering broadcasts\\ time stamp(s)} & received at\\ 
\hline
$E'''$& $\tau'''$ & $\te'$\\ 
$\be'''$& $\bta'''$ & $E'$\\ 
$\te'''$& $\tta'''$ & $E'$\\
$E''$& $\tau''$ & $\be'$\\ 
$\be''$& $\bta''$ & $\te'$\\ 
$\te''$& $\tta''$ & $\be'$\\
$E'$& $(\tau_{E'}=\tau',\bta''',\tta''')$ & $U_r$\\
$\be'$& $(\tau'',\bta'=\bta_{\be'},\tta'')$ & $U_r$\\
$\te'$& $(\tau''',\bta'',\tta_{\te'}=\tta')$ & $U_r$\\
\hline 
\end{tabular}
\end{table}
Then, the position in the emission grid $\rset_P^3$ of the user at $U_r$ is: $(\tau',\bta',\tta')$, and those of $E'$, $\be'$ and $\te'$ are respectively: $(\tau',\bta''',\tta''')$, $(\tau'',\bta',\tta'')$ and
$(\tau''',\bta'',\tta')$.
\begin{remark}\label{rk4}
It matters to notice that in auto-locating RPSs the time stamp broadcast by each main emitter is also one of its emission coordinates, \eg, $\tau_{E'}=\tau'$ for $\mem$ at $E'$ in Table~\ref{evbroa11} and $\bta'=\bta_{\be'}$ for $\bem$ at $\be'$ in Table~\ref{tab21}. This property is common to any RPS whatever is the spacetime dimension. 
\end{remark}
\subsection{The description of the RLS and its causal structures}
The determination of the first emission coordinate  $\tau_e$ for the event $e$ to be localized is obtained from a first system of light ``echoes'' associated with the privileged emitter $\mem$. And then, this system is linked to one event of reception $U_r\in\wlu$ where all the time stamps are collected by the user. We denote by $\memch$ this system of light ``echoes'' on the worldline of the given, privileged emitter $\mem$.
\par
Also, one of the key ingredient in the echoing process presented below is the way any event $K$ in the past null cone of $E_p$ is associated with a ``bright'' point on the celestial circle $\cel_{E_p}$ (see Figure~\ref{pstcne21}). Because  $K\to E_p$\,, we can only consider  null ``directions'' $k_{E_p}$ at the origin $E_p$ and tangent at $E_p$ to the null geodesic joining $K$ to $E_p$. The abstract space whose element are these past null directions we call $\nm$\,. This space can be represented by the intersection $\cel_{E_p}$ of the past null cone with a spacelike surface passing through an event $N_p\in\wl$ in the past vicinity of $E_p$, \ie, $N_p\ll E_p$\,. Then, the exterior of this celestial circle represents spacelike directions.
\par
In physical terms, the significance of $\cel_{E_p}$  is the following. Light rays reaching the event $E_p$ and detected by the ``eye'' of the satellite correspond to null lines through $E_p$ whose past directions constitutes the field of vision of the ``observing'' satellite. This is $\nm$ and it is represented by the circle $\cel_{E_p}$\,; the latter to be an accurate geometrical representation of what the satellite actually `` sees.'' For, the satellite can be considered as permanently situated at the center of a unit circle (his circle of vision) onto which the satellite maps all it detects at any instant. Then, the mapping  of the past null directions at $E_p$ to the points of $\cel_{E_p}$ we can call the \textit{sky mapping}. 
Additionally,  because 1) the circle $S^1$ is homeomorphic to the real projective line $\pru$\,, and 2) we need angle measurements to frame the points of $\cel_{E_p}$ associated with any event $K$ in the past null cone of $E_p$ to be furthermore localized, then a particular production process of projective frame for $\cel_{E_p}$ must be devised and incorporated in the echoing system definition now given below.
\begin{definition}
\label{echsys}
\textbf{The echoing system $\memch$} --
The echoing system $\mem{ch}$ associated with the \emph{privileged} emitter $\mem$  is based on the following features (see Figure~\ref{causechoes} and Table~\ref{tabcausechoes}): 
\begin{itemize}
\item one \emph{primary} event $E_p$ with its celestial circle $\cel_{E_p}$,
\item three \emph{secondary} events $\bebul$, $\tebul$ and $\sembul$, associated respectively with the canonical projective points $[0]_{E_p}$, $[\infty]_{E_p}$ and $[1]_{E_p}$ of the projective frame $\F_{E_p}$ defined on $\cel_{E_p}$,
\item two \emph{ternary} events: $\ediam$ and $E''$, 
\item a compass on $\cel_{E_p}$ with a moving origin \emph{anchored} on the projective point $[1]_{E_p}$ of $\cel_{E_p}$ associated with $\sembul$,
\item an event of reception $U_r\in\wlu$ at which all the data are collected and sent by the emitter $\mem$.
\end{itemize}
\end{definition}
The determination of the second (\textit{resp}. third) emission coordinate  $\bta_e$ (\textit{resp}. $\tta_e$) for the event $e$ to be localized is obtained from a second (\textit{resp}. third) system of ``echoes'' associated with the privileged emitter $\bem$ (\textit{resp}. $\tem$). It is also linked to one event of reception $\bu_\ell$ (\textit{resp}. $\tu_\ell$) where all the time stamps are collected. We denote by $\bemch$ (\textit{resp}. $\temch$) this second (\textit{resp}. third) system of ``echoes'' on the worldline of the privileged emitter $\bemch$ (\textit{resp}. $\temch$).
\par
Then, we have:
\begin{definition}
\label{tbechsys}
\textbf{The echoing systems $\bemch$ and $\temch$} --
The definitions of the echoing systems $\bemch$ and $\temch$ are obtained when making the following substitutions of events and marks in the definition of $\memch$:
\begin{itemize}
\item For $\bemch$\,: $(U,E,\be,\te)\longrightarrow(\bu,\be,\te,E)$ and \,$\bullet\longrightarrow\ast$\,, and
\item for $\temch$\,: $(U,E,\be,\te)\longrightarrow(\tu,\te,E,\be)$ and \,$\bullet\longrightarrow\,'\,$\,. 
\end{itemize}
\end{definition}
\newpage 
Then, we have the following causal structure of the echoing system $\memch$ (Figure~\ref{causechoes} and Table~\ref{tabcausechoes}); the other two causal structures for $\bemch$ and $\temch$ (Figures~\ref{causbechoes} and \ref{caustechoes}) are deduced from the causal structure of $\memch$ by making the substitutions indicated in Definition~\ref{tbechsys}. We indicate also the three structures with the event $e$ (Figure~\ref{threeS}).
\begin{figure}[ht]
\centering
$$\xymatrix  @!=1em {
&& U_r &
\\
*!R{primary}&& E_p \ar[u]&
\\
*!R{secondary}&\underset{\txtred{([0])}}{\bebul} \ar[ru] &\underset{\txtred{([\infty])}}{\tebul} \ar[u] & \underset{\txtred{([1])}}{\sembul} \ar[lu]
\\
*!R{ternary}&\ediam \ar[u] &E'' \ar[u] & 
\\
&&& {e} \ar[luuu]
}$$
\textit{$\ediam$ and $E''$ chronologically ordered.}
\caption{\label{causechoes}The causal structure of $\memch$ with the event $e$\,.}
\end{figure}
%
\begin{table}[ht]
\centering
\caption{\label{tabcausechoes}The events and their broadcast time stamps in the $\memch$ system.}
\begin{tabular}{ccc}
\hline \hline
Event & \parbox[c]{7em}{\centering broadcasts\\ time stamp} & received at\\ 
\hline 
$\ediam$& $\tadiam$& $\bebul$\\ 
$E''$& $\tau''$ & $\tebul$\\
$\sembul$& $\tau_{\sembul}$ & $E_p$\\ 
$\bebul$& $\tadiam$ & $E_p$\\
$\tebul$& $\tau''$ & $E_p$\\
\hline 
\end{tabular}
\end{table}
\begin{figure}[!h]
\centering
$$\xymatrix  @!=1.2em {
&& \bu_r &
\\
*!R{primary}&& \be_p \ar[u]&
\\
*!R{secondary}&\underset{\txtred{([0])}}{\teast} \ar[ru] &\underset{\txtred{([\infty])}}{\east} \ar[u] & \underset{\txtred{([1])}}{\semast} \ar[lu]
\\
*!R{ternary}&\bediam\ar[u] &\be'' \ar[u] & 
\\
&&& {e} \ar[luuu]
}$$
\textit{$\bediam$ and $\be''$ chronologically ordered.}
\bigskip
\caption{\label{causbechoes}The causal structure of $\bemch$ with the event $e$\,.}
\end{figure}
%
\begin{figure}[ht]
\centering
$$\xymatrix  @!=1em {
&& \tu_r &
\\
*!R{primary}&& \te_p \ar[u]&
\\
*!R{secondary}&\underset{\txtred{([0])}}{E'} \ar[ru] &\underset{\txtred{([\infty])}}{\be'} \ar[u] & \underset{\txtred{([1])}}{\sem'} \ar[lu]
\\
*!R{ternary}&\tediam \ar[u] &\te'' \ar[u] & 
\\
&&& {e} \ar[luuu]
}$$
\textit{$\tediam$ and $\te''$ chronologically ordered.}
\bigskip
\caption{\label{caustechoes}The causal structure of $\temch$ with the event $e$\,.}
\end{figure}
\begin{figure}[ht]
\centering
$$\xymatrix  @!=1em {
&U_r&\bu_p&\tu_p&&&
\\
&E_p\ar[u]&\be_p\ar[u]&\te_p\ar[u]&&&
\\
&&&&&&
\\
e\ar[uur]\ar[uurr]\ar[uurrr]&&\sembul\ar[uul]&\ll&\semast\ar[uull]&\ll& \sem'\ar[uulll]
}$$
\caption{\label{threeS}The causal structure for the three echoing systems $\memch$, $\bemch$ and $\temch$ with the event~$e$. The chronological order between $\sembul$, $\semast$ and $S'$ belonging to $\wl^\sem$ can be different.}
\end{figure}
%
\begin{remark}
Again (Remark~\ref{match}), it matters to notice that the three events of reception $U_r$, $\bu_r$ and $\tu_r$ (Figure~\ref{threeS}) are matched by the user on the basis of an identification of the physical data for $e$ transmitted by the main emitters (see convention \ref{physid}).
\end{remark}
\subsection{The projective frames and the time stamps correspondences}
The realization of the RLS is based on a sort of spacetime parallax, \ie, a passage from angles ``$\alpha$'' measured on celestial circles to spatio-temporal distances. And thus, because spatio-temporal distances are evaluated from time stamps ``$\tau$'' in the present context, we need to make the translation of angles into time stamps. This involves onboard compasses embarked on each main emitter to find somehow the bearings. Then, this translation is neither more nor less than a change of projective frames.
\par
To make this change of projective frames effective, we need to define the projective frames on the celestial circles attached to each main emitter. This can be done ascribing to specific ``bright points'' detected on the celestial circles both angles and time-stamps.  This ascription is then naturally achieved if these bright points are the main emitters themselves since they broadcast the time-stamps. But, if we have three emitters for the RPS, then only two bright points can be detected on each celestial circle attached to each main emitter. And, we need three bright points to have a projective frame on the celestial circle homeomorphic to $\pru$; thus the need for the ancillary emitter $\sem$.
%
The change of projective frames is described in Table~\ref{chprjfr21} and Figures~\ref{pstcne21} and \ref{prjline21}. For instance, the main emitter $\bem$  broadcasts the time stamp $\tadiam$ at the secondary event $\bebul$, and the former is then received by the emitter $\mem$ at the primary event $E_p$. Also, if $\bem$ is always associated by convention with the canonical projective point $[0]_{E_p}$ on the celestial circle of $\mem$, then we deduce that $\tadiam$ corresponds by a projective transformation to $0$. And then, we proceed in the same way with the other two canonical projective points. 
\par
\begin{table}[ht]
\centering
\caption{\label{chprjfr21}The change of projective frame and the corresponding events.}
\begin{tabular}{ccccc}
\hline \hline
Event &\mbox{\qquad}& $\F_{E_p}$ &\mbox{\qquad}& $\F^\tau_{E_p}$\\ 
\hline 
$e$&& $[\tan\alpha_e]$&& $[\tau_e]$\\ 
$\bebul$&& $[0]$ && $[\tadiam]$\\
$\tebul$&& $[\infty]$ && $[\tau'']$\\ 
$\sembul$&& $[1]$ && $[\tau_{\sembul}]$\\
\hline 
\end{tabular}
\end{table}
\begin{figure}[ht]
\centering
\caption{\label{pstcne21}The past null cone and the celestial circle $\cel_{E_p}$.}
\begin{tikzpicture}[scale=.85]
\large
\fill[
left color=yellow!90!white,
right color=yellow!70!white, 
shading=axis,
opacity=0.9
](4,0) -- (0,3) -- (-4,0) arc (180:360:4cm and 1cm);
\draw[color=red,ultra thick](2.7,0.97)node[anchor=south west,color=black]{$\cel_{E_p}$}  arc (0:180:2.7cm and -0.54cm) ;
\draw[opacity=1,thick,color=orange!1](4,0) -- (0,3) node[anchor=south,color=black]{$E_p$} -- (-4,0) arc (180:360:4cm and 1cm);
\draw[-Stealth,line width=1pt,color=magenta](4,-1.5) node[anchor=north  ,color=black]{$\sembul$} -- (0,3) ;
\fill[color=magenta] (2.1,0.63)node[right=5pt,above=1pt,color=black]{$\boldsymbol{[1]}$} circle [radius=3pt];
\draw[-Stealth,line width=1pt,color=green!80!black](-2.5,-2) node[anchor=north,color=black]{$\bebul$} -- (0,3) ;
\fill[color=green!80!black] (-1.25,0.5)node[left=5pt,above=1pt,color=black]{$\boldsymbol{[0]}$} circle [radius=3pt];
\draw[-Stealth,line width=1pt,color=green!80!black](0.5,-2.1)  node[anchor=north,color=black]{$\tebul$}-- (0,3) ;
\fill[color=green!80!black] (.25,0.45)node[left=12pt,below=2pt,color=black]{$\boldsymbol{[\infty]}$} circle [radius=3pt];
\draw[-Stealth,line width=1pt,color=blue] (-4.5,-1.5) 
node[anchor=north east,color=black]{$e$} -- (0,3) ;
\fill[color=blue] (-2.3,0.7) node[left=25pt,below=2pt,color=black]{$\boldsymbol{[\tan\alpha_e]}$}  circle [radius=3pt];
\end{tikzpicture}
\end{figure}
\begin{figure}[ht]
\centering
\caption{\label{prjline21}The projective line associated with the celestial circle $\cel_{E_p}$.}
\begin{tikzpicture}[scale=1]
\draw[line width=2pt,color=red](-5,0)
node[left=15pt,below=4pt,color=black]{\large$\F^{\tau}_{E_p}$}
node[left=15pt,above=4pt,color=black]{\large$\F_{E_p}$}
-- (5,0) node[color=black,right]{\large$\cel_{E_p}$} ;
\fill[color=blue] (-3.5,0) node[above=4pt,color=black]{\large$[\tan\alpha_e]$}
node[below=4pt,color=black]{\large$[\tau_e]$}
circle [radius=3pt];
\draw[rounded corners=10pt,line width=1.5pt,color=blue] (-4.4,-1) rectangle (-2.6,1);
\fill[color=green!80!black] (-1.5,0) node[above=4pt,color=black]{\large$[0]$}
node[below=4pt,color=black]{\large$[\tadiam]$}
circle [radius=3pt];
\fill[color=green!80!black] (1,0) node[above=4pt,color=black]{\large$[\infty]$}
node[below=4pt,color=black]{\large$[\tau'']$}
circle [radius=3pt];
\fill[color=magenta] (3.5,0) 
node[above=4pt,color=black]{\large$[1]$}
node[below=4pt,color=black]{\large$[\tau_{\sembul}]$}
circle [radius=3pt];
\end{tikzpicture}
\end{figure}
As a result, the relations between the angles and the time-stamps are the following:
\begin{subequations}
\label{crossr}
\begin{align}
&\tan\alpha_e
=\cross{\tadiam}{\tau''}{\tau_e}{\tau_{\sembul}}
=\frac{\ratio{\tadiam}{\tau''}{\tau_e}}{
\ratio{\tadiam}{\tau''}{\tau_{\sembul}}}
\equiv\mobf(\tau_e)\,,
\\
&\tan\bar{\alpha}_e
=\cross{\btadiam}{\bta''}{\bta_e}{\bta_{\semast}}
=\frac{\ratio{\btadiam}{\bta''}{\bta_e}}{
\ratio{\btadiam}{\bta''}{\bta_{\semast}}}
\equiv\bmobf(\bta_e)\,,
\\
&\tan\tilde{\alpha}_e
=\cross{\ttadiam}{\tta''}{\tta_e}{\tta_{\sem'}}
=\frac{\ratio{\ttadiam}{\tta''}{\tta_e}}{
\ratio{\ttadiam}{\tta''}{\tta_{\sem'}}}
\equiv\tmobf(\tta_e)\,,
\end{align}
\end{subequations}
where $\cross{a}{b}{c}{d}$ is the cross-ratio of the four projective points
$a$, $b$, $c$ and $d$:
\begin{equation}
\label{crossratio}
\cross{a}{b}{c}{d}=\frac{\ratio{a}{b}{c}}{\ratio{a}{b}{d}}
\qquad
\text{where}
\qquad
\ratio{a}{b}{c}=\cross{a}{b}{c}{\infty}=\left(\frac{a-c}{b-c}\right).
\end{equation}
Conversely, the time coordinates for the event $e$ are then obtained from the angles measurements and the following formulas:
\begin{subequations}
\label{mobtau}
\begin{align}
&\tau_e=\left(
\frac
{\tadiam-\tau''\ratio{\tadiam}{\tau''}{\tau_{\sembul}}\tan\alpha_e}
{1-\ratio{\tadiam}{\tau''}{\tau_{\sembul}}\tan\alpha_e}
\right),\\
&\bar{\tau}_e=\left(
\frac
{\btadiam-\bta''\ratio{\btadiam}{\bta''}{\bta_{\semast}}\tan\bar{\alpha}_e}
{1-\ratio{\btadiam}{\bta''}{\bta_{\semast}}\tan\bar{\alpha}_e}
\right),\\
&\tilde{\tau}_e=\left(
\frac
{\ttadiam-\tta''\ratio{\ttadiam}{\tta''}{\tta_{\sem'}}\tan\tilde{\alpha}_e}
{1-\ratio{\ttadiam}{\tta''}{\tta_{\sem'}}\tan\tilde{\alpha}_e}
\right).
\end{align}
\end{subequations}
And thus, the event $e$ is localized in the localization grid $\rset^3_L$.
Then, we deduce the following lemma:
\begin{lemma}
The map 
\begin{equation}\label{autol3}
\mathbb{S}:(\tan\alpha_e,\tan\bar{\alpha}_e,\tan\tilde{\alpha}_e)\in\tritor
\longrightarrow(\tau_e,\bta_e,\tta_e)\in\tritor
\end{equation}
where the $\tritor\equiv(\pru)^3$ is an automorphism.
\end{lemma}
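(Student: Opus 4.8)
The plan is to exploit the fact that $\sembb$ is \emph{diagonal}: by \eqref{mobtau} the coordinate $\tau_e$ depends on $\tan\alpha_e$ alone, $\bta_e$ on $\tan\bar\alpha_e$ alone, and $\tta_e$ on $\tan\tilde\alpha_e$ alone. Hence I would first record that $\sembb=\mobf^{-1}\times\bmobf^{-1}\times\tmobf^{-1}$ is the Cartesian product of three self-maps of a single $\pru$ factor, so that it suffices to show each factor is a projective automorphism of $\pru$, i.e. lies in $\pgld$. A product of three homographies is then an automorphism of $\tritor=(\pru)^3$ in the product sense, with inverse $\mobf\times\bmobf\times\tmobf$.

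\textbf{Step 1: each factor is a homography.} I would show that $\mobf$ (and, after the substitutions of Definition~\ref{tbechsys}, likewise $\bmobf$ and $\tmobf$) is a homography of $\pru$. By \eqref{crossr} together with \eqref{crossratio}, $\tan\alpha_e=\mobf(\tau_e)$ is, in the affine chart, a ratio of two affine functions of $\tau_e$, hence fractional-linear; equivalently, $\mobf$ is the unique homography carrying the projective frame $\F_{E_p}=\{[\tadiam],[\tau''],[\tau_{\sembul}]\}$ onto the canonical frame $\{[0],[\infty],[1]\}$ (Table~\ref{chprjfr21}). The point that needs justification here is non-degeneracy: such a homography exists and is invertible iff $\F_{E_p}$ is a genuine frame, i.e. its three points are pairwise distinct. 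For this I would invoke the construction of $\memch$ (Definition~\ref{echsys}): the secondary events $\bebul$, $\tebul$, $\sembul$ are distinct and attached to distinct bright points of $\cel_{E_p}$, so $\tadiam$, $\tau''$, $\tau_{\sembul}$ are pairwise distinct values on $\pru$. This gives $\mobf\in\pgld$, and the analogous statements for $\bmobf$, $\tmobf$ follow verbatim from the analogous echoing systems.

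\textbf{Step 2: inverse and regularity.} Since a homography of $\pru$ is a real-analytic diffeomorphism whose inverse is again a homography, I would then note that \eqref{mobtau} is precisely $\mobf^{-1}\times\bmobf^{-1}\times\tmobf^{-1}$ --- either by substituting \eqref{crossr} into \eqref{mobtau} and simplifying, or, more cheaply, by uniqueness of the frame-to-frame homography. Hence $\sembb$ is a product of three homographies, therefore a real-analytic diffeomorphism of $\tritor\cong(S^1)^3$ preserving the product projective structure, with two-sided inverse $\mobf\times\bmobf\times\tmobf$; this is exactly the claim that $\sembb$ is an automorphism of $\tritor$.

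\textbf{Expected main obstacle.} The cross-ratio algebra of Step 2 is routine and harmless; the only genuinely delicate point is that \eqref{crossr} and \eqref{mobtau} are written in affine charts and a priori define partial maps of $\rset$ only, so I must check that they extend across the point at infinity of each $\pru$ factor. I expect this to dissolve once the maps are identified as elements of $\pgld$: the locus where a denominator in \eqref{mobtau} vanishes is exactly the preimage of $[\infty]$, a bona fide point of $\pru$, and similarly for the exceptional frame values. Thus the substantive hypothesis is simply the non-degeneracy of the frames $\F_{E_p}$, $\F_{\be_p}$, $\F_{\te_p}$, which is built into the definition of the echoing systems, and I would make sure this is stated explicitly, since it is the one place where the RLS construction actually enters the argument.
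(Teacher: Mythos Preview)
Your proposal is correct and follows exactly the same approach as the paper: the paper's proof is a one-line remark that the result is ``obvious from the relations \eqref{crossr}, because $\mobf$, $\bmobf$ and $\tmobf$ are bijective M{\"o}bius transformations,'' which is precisely your Step~1 together with the product structure you record at the outset. Your additional care about non-degeneracy of the frames and the extension across $[\infty]$ is sound and fills in details the paper leaves implicit, but the underlying idea is identical.
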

\begin{proof}
Obvious from the relations \eqref{crossr}, because $\mobf$, $\bmobf$ and $\tmobf$ are bijective M{\"o}bius transformations. 
\end{proof}
\subsection{The Consistency  between the positioning and localization protocols}
\begin{theorem}\label{consist21}
The localization and the positioning protocols or systems in a (2+1)-dimensional spacetime are consistent.
\end{theorem}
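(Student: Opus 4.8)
The plan is to verify Definition~\ref{consist} by a direct, event-by-event inspection: given an arbitrary event $K$ on the worldline of one of the four emitters $\mem,\bem,\tem,\sem$, I will check that the three emission coordinates assigned to $K$ by the echoing systems $\memch$, $\bemch$, $\temch$ are exactly the emission coordinates assigned to $K$ by the auto-locating RPS (as recorded in Table~\ref{tab21} and Remark~\ref{rk4}). By the cyclic relabellings of Definition~\ref{tbechsys} it suffices to treat one representative event on $\wl$ together with one on $\wl^\sem$; the cases $K\in\wlb$ and $K\in\wlt$ then follow word for word. The sole geometric input is the standing convention built into Definition~\ref{echsys}: on every celestial circle of a privileged emitter the three other emitters occupy the canonical projective points $[0]$, $[\infty]$, $[1]$ (the compass being permanently anchored on the $[1]$-point carried by $\sem$), each main emitter sitting at $[0]$ on one neighbour's circles and at $[\infty]$ on the other's, and $\sem$ sitting at $[1]$ on all three.

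Take first $K\in\wl$. Since two distinct events on a single emitter's worldline are never joined by a null geodesic, in the instance of $\memch$ localizing $K$ the primary event must be $K$ itself; there the celestial direction degenerates and the echoing reduces to $\mem$ reading its onboard clock at $K$, whose value is, by Remark~\ref{rk4}, precisely the first RPS emission coordinate of $K$. In the instance of $\bemch$ localizing $K$, the primary event $\be_p$ is the event on $\wlb$ from which $K$ is seen; as $K$ lies on $\wl$ it is seen from $\be_p$ in the direction conventionally labelled $[\infty]_{\be_p}$, hence $K$ is the corresponding secondary event $\east$ and $\tan\bar{\alpha}_K=\infty$. Feeding this into the explicit M\"obius formula \eqref{mobtau} yields $\bta_K=\bta''$, which ---reading Table~\ref{tabcausechoes} through the substitution of Definition~\ref{tbechsys}--- is the $\bem$-time stamp received at $K$ along a single null geodesic, \ie, the second RPS emission coordinate of $K$. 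Symmetrically, in $\temch$ the event $K$ is seen from $\te_p$ in the direction $[0]_{\te_p}$, so it is the secondary event $E'$, $\tan\tilde{\alpha}_K=0$, and \eqref{mobtau} returns $\tta_K=\ttadiam$, the third RPS emission coordinate of $K$. The three coordinates therefore coincide, and cyclically permuting the emitters disposes of $K\in\wlb$ and $K\in\wlt$.

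The case $K\in\wl^\sem$ is immediate: $K$ coincides with the $[1]$-secondary event of each of $\memch$, $\bemch$, $\temch$, so that $\tan\alpha_K=\tan\bar{\alpha}_K=\tan\tilde{\alpha}_K=1$, and evaluated at $1$ the formulas \eqref{mobtau} collapse to $\tau_K=\tau_{\sembul}$, $\bta_K=\bta_{\semast}$, $\tta_K=\tta_{\sem'}$; since $\sem$ broadcasts its own RPS emission coordinates, these right-hand sides are by construction the positioning coordinates of $K$. Collecting the four cases proves the theorem.

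I expect the principal obstacle to be the degenerate native-coordinate step: one must argue cleanly that when the event to be localized lies on the privileged emitter's own worldline the echoing protocol is not ill-posed but simply reduces to a direct clock reading, and that this reading is the very value carried by the auto-locating RPS (Table~\ref{tab21}, Remark~\ref{rk4}). A second, purely clerical difficulty is keeping the three cyclic substitutions of Definition~\ref{tbechsys} consistently aligned, so that each identity $\bta_K=\bta''$, $\tta_K=\ttadiam$ and their cyclic analogues is paired with the correct entry of the RPS table; it is also worth verifying, once, that the relay pattern in Table~\ref{tabcausechoes} genuinely reproduces the direct-reception pattern of Table~\ref{tab21} on worldline events, since consistency concerns the physical time-stamp values and not the instance-dependent labels.
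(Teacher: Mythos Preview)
Your proof is correct and follows essentially the same case-by-case route as the paper's: both check that for an event on an emitter worldline the M\"obius formulas \eqref{mobtau} return the RPS emission coordinates, with the degenerate native coordinate supplied by a direct clock reading (what the paper formalizes as \emph{identification}, Definition~\ref{iden}). The only notable difference is that the paper invokes the Kronheimer--Penrose causal axioms explicitly to justify the coincidence $K=\sembul=\semast=\sem'$ in the ancillary case (your ``immediate'' step), and writes out the secondary-event case separately rather than appealing to the cyclic relabellings of Definition~\ref{tbechsys}.
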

\begin{proof}\label{proofconsist21}
The consistency must be satisfied if $e$ is an element of the emitters' worldlines. Indeed, the localization protocol is consistent with the positioning protocol if the set of events on the emitters' worldlines from which the localization of any event $e$ is possible are themselves localizable.
\begin{itemize}
\item 
\textbf{Case 1: $e\in\wl^\sem$.}\par
We consider two cases: $e\ll\sembul$ and $\sembul\ll e$. The other cases with $\semast$ or $S'$ instead of $\sembul$ give the same results. Now, we start with the assumption  $e\ll\sembul$ from which we deduce the following causal structure:
$$\xymatrix  @!=1em {
&&&E_p\\
\sembul \ar[urrr]&\ll& e \ar[ur] &
}$$
In particular, from $\sembul\ll e\to E_p$\,, we find\footnote{
\label{conVII}
In \cite{Kronheimer1967}:
\begin{itemize}
\item Condition (V): $x\ll y \Longrightarrow x\prec y$.
\item Condition (VII): $x\to y\Longleftrightarrow x\prec y$ and $x\not\ll y$.
\end{itemize}}
that $\sembul\prec e\prec E_p$, and then, with $\sembul\to E_p$, we obtain\footnote{\textbf{Lemma 1-1}\cite{Kronheimer1967}: Let $x$, $y$ and $z$ be points in a causal space. If $x\prec y\prec z$ and $x\to z$ then $x\to y\to z$\,.} $\sembul\to e\to E_p$\,. But then,${}^{\ref{conVII}}$ we have  $\sembul\not\ll e$. With the assumption $e\ll\sembul$, we deduce also $e\not\ll\sembul$,  and therefore $\sembul=e$\,. Hence, we consider that $\sembul=e$ and with the other two sets of events $\semast$ with $\be_p$ or $S'$ and $\te_p$, we deduce finally that $e=\sembul=\semast=S'\equiv S$\,. Therefore, we conclude that the time coordinates of $e$ provided by the positioning system are $\tau_e=\tau_S$, $\bta_e=\bta_S$ and $\tta_e=\tta_S$\,.
\par
Besides, from the projective frames, we have also:
\begin{subequations}\label{crossrbis}
\begin{align}
&\tan\alpha_e
=\cross{\tadiam}{\tau''}{\tau_e}{\tau_{\sembul}}
=1\,,
\\
&\tan\bar{\alpha}_e
=\cross{\btadiam}{\bta''}{\bta_e}{\bta_{\semast}}
=1\,,
\\
&\tan\tilde{\alpha}_e
=\cross{\ttadiam}{\tta''}{\tta_e}{\tta_{\sem'}}
=1\,.
\end{align}
\end{subequations}
And therefore, we obtain:
\begin{equation}\label{egaltau}
\tau_e=\tau_{\sembul}\,,
\qquad
\bta_e=\bta_{\semast}\,,
\qquad
\tta_e=\tta_{\sem'}\,,
\end{equation}
which are the coordinates of $S$.
\par
In conclusion, the localization protocol is consistent with the positioning one.
\item
\textbf{Case 2: $e$ is a \textit{primary} event: $e=E_p\,(primary)\in\wl$.}
\par
In this case, we obtain the following causal structure (Figure~\ref{threeP}):
\par
\begin{figure}[ht]
$$\xymatrix  @!=3em {
&\bu_r&&U_r&&\tu_r&\\
&\be_P\ar[u]&&&&\te_P\ar[u]&\\
\semast\txtred{([1])}\ar[ur]&\teast\txtred{([0])}\ar[u]&&e=E_P=\east\txtred{([\infty])}=E'\txtred{([0])}\ar[ull]\ar[uu]\ar[urr]&&\be'\txtred{([\infty])}\ar[u]&\sem'\txtred{([1])}\ar[ul]\\
&\bediam\ar[u]&\bebul\txtred{([0])}=\be''\ar[ur]&\sembul\txtred{([1])}\ar[u]&\tebul\txtred{([\infty])}=\tediam\ar[ul]&\te''\ar[u]&\\
&&\ediam\ar[u]&&E''\ar[u]&&
}$$
\caption{\label{threeP}The causal structure for the three echoing protocols $\memch$, $\bemch$ and $\temch$ whenever $e=E_P$.}
\end{figure}
Then, from the three echoing causal structures $\memch$, $\bemch$ and $\temch$, we have $e=E_p=\east=E'$ where $\east$ is associated with the projective point $[\infty]_{\be_p}$ and $E'$ is associated with the projective point $[0]_{\te_p}$. Consequently, we have $\tan\bar{\alpha}_e=\infty$ and $\tan\tilde{\alpha}_e=0$. Also, we have $\bebul=\be''$ and $\tebul=\tediam$ from which we deduce from the positioning system that their time coordinates are equal, \ie, we have (one of the emission coordinates is equal to the broadcast one in the positioning protocol; see Remark~\ref{rk4})
\begin{equation}\label{eepos}
\bta_{\bebul}=\btabul=\bta''=\bta_{\be''}\,,\qquad
\tta_{\tebul}=\ttabul=\ttadiam=\tta_{\tediam}\,. 
\end{equation}
 Besides, from the localization protocol, we have
\begin{subequations}\label{crossrter}
\begin{align}
&\tan\alpha_e
=\cross{\tadiam}{\tau''}{\tau_e}{\tau_{\sembul}}
=\,?\quad\text{(not defined)}\,,
\\
&\tan\bar{\alpha}_e
=\cross{\btadiam}{\bta''}{\bta_e}{\bta_{\semast}}
=\infty\,,
\\
&\tan\tilde{\alpha}_e
=\cross{\ttadiam}{\tta''}{\tta_e}{\tta_{\sem'}}
=0\,.
\end{align}
\end{subequations}
Hence, we deduce:
\begin{equation}\label{casa}
\tau_e=\,?\,,
\qquad
\bta_e=\bta''\,,
\qquad
\tta_e=\ttadiam\,.
\end{equation}
And from the positioning protocol, because $E_p$ is a positioned point with emission coordinates $(\tau_{E_p},\btabul,\ttabul)$, we have also
\begin{equation}\label{lcons}
\bta_e=\bta_{E_p}=\btabul\,,
\qquad
\tta_e=\tta_{E_p}=\ttabul\,,
\end{equation}
and therefore with \eqref{eepos}, we deduce the consistency for two time stamps. Actually, $\tau_e$ is not obtain by localization but by \textit{identification} (Definition~\ref{iden}). Indeed, we know that $e$ is an element of $\wl$ and that $\tau_e=\tau_{E_p}$ is broadcast by the \textit{identified} main emitter $\mem$. This determination of $\tau_e$ is then similar to the emission coordinate ascription presented in the (1+1)-dimensional case for which localization is equivalent to positioning (Remark~\ref{equivlocpos}); hence the consistency.
\item \textbf{Case 3: $e$ is a \textit{secondary} event: $e=\bebul\,(secondary)\in\wlb$ or $e=\be'\,(secondary)\in\wlb$.}
Then, the causal structure is the following whenever $e=\bebul$ (Figure~\ref{threePS}):
\begin{figure}[ht]
$$\xymatrix  @!=5.5ex {
&\tu_r&&&U_r&&&\bu_r&\\
&\te_P\ar[u]&&&E_P\ar[u]&&&\be_P\ar[u]&\\
\sem'\txtred{([1])}\ar[ur]&E'\txtred{([0])}\ar[u]&\be'\txtred{([\infty])}\ar[ul]&\tebul\txtred{([\infty])}\ar[ur]& {e=\bebul\txtred{([0])}}\ar[u]\ar[ulll]\ar[urrr]&\sembul\txtred{([1])}\ar[ul]&\teast\txtred{([0])}\ar[ur]&\east\txtred{([\infty])}\ar[u]&\semast\txtred{([1])}\ar[ul]\\
&\tediam\ar[u]&\te''\ar[u]&E''\ar[u]&\ediam\ar[u]&&\bediam\ar[u]&\be''\ar[u]&
}$$
\caption{\label{threePS}The causal structure for the three echoing protocols $\memch$, $\bemch$ and $\temch$ whenever $e=\bebul$.}
\end{figure}
\begin{itemize}
\item $e=\bebul\,(secondary)\in\wlb$ -- Then, the localization protocol at $E_p$ gives the formula: $\tan\alpha_e=\cross{\tadiam}{\tau''}{\tau_e}{\tau_{\sembul}}=0$ because $\bebul$ is associated with the projective point $[0]_{E_p}$. Therefore, we have $\tau_e=\tadiam$. But, from the positioning protocol, the emission coordinate $\tau_{\bebul}$ of $\bebul$ relative to the main emitter $\mem$ is $\tadiam$ broadcast at the ternary event $\ediam$. Hence, $\tau_{\bebul}=\tadiam$ and we deduce the consistency of the localization protocol with the positioning protocol for one emission coordinate. 
\item $e=\be'\,(secondary)\in\wlb$ -- The reasoning is similar to the previous one. Then, we deduce the consistency for $\tta_e=\tta_{\be'}=\tta''$ because 1) $\te_p$ is the primary event for $\be'$ and $\te''$ is the ternary event for $\be'$, and 2) $\cross{\ttadiam}{\tta''}{\tta_e}{\tta_{\sem'}}=\infty$ which involves $\tta_e=\tta''$.
\end{itemize} 
Now, we consider two distinct causal structures of localization $a$ and $b$ such that $e=\be^{\bullet{}a}=\be'^{b}$ from which we deduce the consistency for  $\tau_e$ and $\tta_e$\,. Furthermore, as in the case 2, we deduce $\bta_e$ by \textit{identification} (Definition~\ref{iden}) and we obtain $\bta_e=\bta_{\be^{\bullet{}a}}=\bta_{\be'^{b}}$\,; hence the consistency.
\item \textbf{Case 4: $e$ is a \textit{ternary} event.}
\par
For instance, we can set $e=\ediam$. But then, we have also $e=\ediam\to E_P$ on $\wl^\mem$ which is impossible since we have only the chronological order on the emitters' worldlines.
\end{itemize}
\end{proof}
\begin{remark}\label{selfloc}
From this theorem, we can then notice that RLSs are based on \emph{auto-localization} protocols similarly to RPSs which are based on \emph{auto-location} protocols. As a result, RLSs and RPSs are independent of any \emph{system of reference}.
\end{remark}
\subsection{The local projective structure}
\begin{definition}\label{grids21}
We call
\begin{itemize}
\item \emph{Emission grid}, the Euclidean space $\rset^3_P\equiv\rset^3$ of the positioned events $\ep=(\tau_e,\bta_e,\tta_e)$,
\item \emph{Localization (\emph{or} quadrometric) grid}, the Euclidean space $\rset^4_L\equiv\rset^3\times\rset^\ast$ of the localized events $\el=(\tau_e,\bta_e,\tta_e,\rta_e)$ where $\rta_e$ is provided by the ancillary emitter $\sem$ by \emph{identification} from the horismotic relation $S^L\to\ep$ $(S^L\in\wl^\sem)$ or the `\emph{message function}' \cite{Woodhouse73} $f_\sem^-:\rset^3_P\longrightarrow\wl^\sem$, \ie, the time stamp $\rta^L$ broadcast by $\sem$ at $S^L$ is such that $\rta^L\equiv\rta_e$, and
\item \emph{Anisotropic localization (\emph{or} quadrometric) grid}, the Euclidean space $\rset^4_{AL}\equiv\rset^3\times\rset^\ast$ of  events $\eal=(\rta_e\tau_e,\rta_e\bta_e,\rta_e\tta_e,\rta_e)$.
\end{itemize}
\end{definition}
\begin{definition}
We denote by $I:\rset_L^4\longrightarrow
\rset_{AL}^4$\,, the bijective map such that $I(\el)=\eal$. And we denote by $\pi:\rset_{AL}^4\longrightarrow
\rset_{P}^3$\,, the submersion such that $\pi(\eal)=\ep$.
\end{definition}
\begin{remark}
In these definitions, the time coordinate $\rta_e$ must be non-vanishing. If this condition is not satisfied we can, nevertheless, always consider that the ancillary emitter $\sem$ \emph{generates} a time number $\mathring\rho$ and \emph{broadcasts} $e^{\mathring\rho}\equiv\rta$. This can be realized from a real-time computer with $\mathring\rho$ as the generated input and $e^{\mathring\rho}$ as the broadcast output. Obviously, we can assume the same for the main emitters.
\end{remark}
Let $g$ be an element of $GL(4,\rset)$ such that $g\,.\,\eal=\eal'$\,. And thus, $GL(4,\rset)$ acts linearly on $\rset_{AL}^4$. Then, the action of $GL(4,\rset)$ on $\rset_{L}^4$ and $\rset_P^3$ is non-linear, locally transitive and it defines homographies (\ie, conformal transformations):
\begin{subequations}\label{hom}
\begin{align}
&\ep'=\left(\frac{A\,.\,\ep+b}{c\,.\,\ep+\mu}\right),
\qquad
g\equiv\begin{pmatrix}
A& b \\ 
 {}^ tc&\mu
\end{pmatrix},\label{epa}
\\
&\rta'_e=\rta_e(c\,.\,\ep+\mu)\,,
\end{align}
\end{subequations}
where $\mu\in\rset$\,, $(b,\,c)\in(\rset^3)^2$ and $A\in M_{3\times3}(\rset)$\,. Let us notice that $\rta_e$ does not intervene in \eqref{epa}. Moreover, we deduce that $PGL(4,\rset)$ acts locally transitively on $\rset_P^3$\,. Therefore, we obtain:
\begin{theorem}\label{pgl4}
The (2+1)-dimensional spacetime manifold has a \emph{local} 3-dimensional projective structure inherited from its causal structure.
\end{theorem}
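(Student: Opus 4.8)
The plan is to recognize the emission grid $\rset^3_P$ as an affine chart of $\prt$, to note that the RLS manufactures out of the causal order a local chart of the $(2+1)$-dimensional spacetime valued in that chart, and then to show that the transition maps between such charts belong to $\pglq$ and act by the homographies \eqref{hom}. First I would recall, from Definition~\ref{grids21} and the remark following it, that every event $e$ reached by the localizing protocol carries coordinates $\el=(\tau_e,\bta_e,\tta_e,\rta_e)\in\rset^4_L$ with $\rta_e\neq 0$, hence anisotropic coordinates $\eal=I(\el)=(\rta_e\tau_e,\rta_e\bta_e,\rta_e\tta_e,\rta_e)\in\rset^4_{AL}\setminus\{0\}$. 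Projectivizing $\rset^4_{AL}$ gives $e\mapsto[\,\rta_e\tau_e:\rta_e\bta_e:\rta_e\tta_e:\rta_e\,]=[\,\tau_e:\bta_e:\tta_e:1\,]$, so the submersion $\pi$ identifies $\rset^3_P$ with the standard affine chart $\{x_3\neq 0\}$ of $\prt=P(\rset^4_{AL})$, and the composite $e\mapsto\ep=(\tau_e,\bta_e,\tta_e)$ is a local chart $\varphi:U\to\prt$ with image in that affine chart, defined on the region $U$ covered by the chosen RLS. Because $\tau_e,\bta_e,\tta_e$ are extracted from the horismotic order through the angle-to-time-stamp dictionary \eqref{mobtau}, and $\rta_e$ is pinned down by the horismos $S^L\to\ep$ of Definition~\ref{grids21}, the chart $\varphi$ is produced purely from the causal structure.

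Next I would take two RLSs whose domains overlap, yielding two such charts $\varphi,\varphi'$ with coordinates $\ep,\ep'$. Lifting both through the section $\ep\mapsto(\ep,1)$ of $\pi$, the two homogeneous-coordinate representatives of a common event differ by a linear automorphism $g\in GL(4,\rset)$, well defined up to a scalar, \ie, a genuine class in $\pglq$; pushing this down by $\pi$ and invoking \eqref{hom}--\eqref{epa}, the transition map $\varphi'\circ\varphi^{-1}$ is the homography $\ep\mapsto(A\,.\,\ep+b)/(c\,.\,\ep+\mu)$, that is, the restriction of the element of $\pglq$ represented by $g$. Together with the local transitivity of $\pglq$ on $\rset^3_P$ established just above the theorem, this exhibits the family of RLS charts as a $\pglq$-atlas modeled on $\prt$, which is by definition a local $3$-dimensional projective structure; and it is inherited from the causal structure since every ingredient of every chart is.

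The delicate point is the middle step: one must argue that the complete freedom in the anisotropic coordinates $\eal$ --- the choice of ancillary emitter, the identification step of Definition~\ref{iden}, the allocation of roles among the emitters --- amounts to exactly a $GL(4,\rset)$ ambiguity on $\rset^4_{AL}$, so that genuinely every change of emission grid is projective and not merely some of them. Here the consistency Theorem~\ref{consist21} is the crucial input: it forces the coordinates of the events lying on the emitters' worldlines, the backbone of the whole echoing construction, to coincide whether they are read off the positioning or the localizing protocol, which keeps the reparametrization freedom from exceeding $GL(4,\rset)$; beyond that, the explicitly homographic form \eqref{epa} of the induced action on $\rset^3_P$ supplies the rest.
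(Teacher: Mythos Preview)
Your overall strategy---exhibit the emission grid as an affine chart of $\prt$ and then argue that transition maps between overlapping RLS charts lie in $\pglq$---is a perfectly standard way to package a local projective structure, but it is \emph{not} the route the paper takes, and the step you flag as ``delicate'' is in fact a genuine gap.

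The unproved assertion is that ``the two homogeneous-coordinate representatives of a common event differ by a linear automorphism $g\in GL(4,\rset)$.'' That is precisely the content of the theorem in your formulation: why should a change of RLS (different emitters, different ancillary, different echoing scheme) induce a \emph{projective} change of $(\tau_e,\bta_e,\tta_e)$ rather than an arbitrary diffeomorphism? Theorem~\ref{consist21} does not help here: consistency says that, \emph{within one fixed RLS}, the positioning and localizing coordinates of events on the emitters' worldlines agree. It says nothing about how the emission coordinates produced by two \emph{distinct} RLSs are related on their overlap, and it certainly does not bound the reparametrization freedom by $GL(4,\rset)$. Nor does the homographic form \eqref{epa}: that formula describes how an element of $GL(4,\rset)$ acting on $\rset^4_{AL}$ pushes down to $\rset^3_P$; it does not tell you that every physically meaningful change of grid \emph{arises} from such an element.

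The paper avoids this difficulty by never comparing two RLSs. Instead it rewrites the cross-ratio relations \eqref{crossr} as \emph{homogeneous} multilinear equations $K^i_{\alpha\beta\mu\nu}\,s_i^\alpha s_i^\beta x^\mu t^\nu=0$ in the anisotropic variables $x\equiv\eal$, the angle variables $t$, and auxiliary vectors $s_i$ built from the echoing time stamps. Homogeneity in $x$ is then the whole point: it makes the locus of localized events a cone in $\rset^4_{AL}$, hence well defined in $P(\rset^4_{AL})$, and the equivariance $K^i(g\cdot s_i,g\cdot s_i,g_x\cdot x,g_t\cdot t)=0$ shows that the $GL(4,\rset)$ action on $\rset^4_{AL}$ preserves the form of the constraints, so that the $\ep$'s form a $\pglq$-orbit. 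If you want to salvage your atlas argument, this homogeneity of the \eqref{crossr}/\eqref{mobtau} relations is exactly the missing ingredient you would need to show that a change of echoing data really does act linearly on $\eal$; invoking consistency alone will not get you there.
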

\begin{proof}
Let $x$, $t$ and $s_i$ ($i=1,2,3$) in $\rset^4_{AL}$ be such that
\begin{align*}
&x\equiv\eal=(x^3=\rta_e\,\tau_e,\,x^2=\rta_e\,\bta_e,\,x^1=\rta_e\,\tta_e,\,x^0=\rta_e)\,,\\
&t\equiv(t^3=t^0\tan\alpha_e,\,t^2=t^0\tan\bar{\alpha}_e,\,t^1=t^0\tan\tilde{\alpha}_e,\,t^0)\,,\qquad(t^0\neq0)\,,\\
&s_1\equiv(s_1^3=s_1^0\,\tau_{\sembul},\,s_1^2=s_1^0\,\tadiam,\,s_1^1=s_1^0\,\tau'',\,s_1^0=\tau_{\sembul}\,\tadiam\tau'')\,,\qquad(s_1^0\neq0)\,,\\
&s_2\equiv(s_2^3=s_2^0\,\bta_{\semast},\,s_2^2=s_2^0\,\btadiam,\,s_2^1=s_2^0\,\bta'',\,s_2^0=\bta_{\semast}\,\btadiam\bta'')\,,\qquad(s_2^0\neq0)\,,\\
&s_3\equiv(s_3^3=s_3^0\,\tta_{\sem'},\,s_3^2=s_3^0\,\ttadiam,\,s_3^1=s_3^0\,\tta'',\,s_3^0=\tta_{\sem'}\,\ttadiam\tta'')\,,\qquad(s_3^0\neq0)\,.
\end{align*}
Then, the relations \eqref{crossr} can be put in the forms ($\alpha,\,\beta,\mu,\,\nu=0,1,2,3$; no summation on $i=1,2,3$):
\begin{equation}\label{ki}
{K^i}_{\alpha,\beta,\mu,\nu}\,s_i^\alpha\,s_i^\beta\,x^\mu\,t^\nu=0\,,
\qquad
{K^0}_{\alpha,\beta,\mu,\nu}\,s_i^\alpha\,s_i^\beta\,x^\mu\,t^\nu\neq0\,,
\end{equation}
where the coefficients of the tensors $K^i$ take only the values 0 or $\pm1$ and the only non-vanishing coefficient of $K^0$ is ${K^0}_{0,0,0,0}$\,. Then, it is easy to shown that for all $g_x$ and $g_t$ in $GL(4,\rset)$ then there exists $g\in GL(4,\rset)$ such that
\[
K^i(g\,.\,s_i,g\,.\,s_i,g_x\,.\,x,g_t\,.\,t)=0\,,
\qquad
K^0(g\,.\,s_i,g\,.\,s_i,g_x\,.\,x,g_t\,.\,t)\neq0\,.
\]
In particular, if $g\equiv Id$ and if the $s_i$ are fixed, then the set of localized events $x\equiv{}e_{AL}\in\rset^4_{AL}$ is an orbit of $GL(4, \rset)$ and the set of corresponding events $e_P = \pi(e_{AL})$ is an orbit of the projective group $PGL(4, \rset)$\,.
\par
And then, because the relations \eqref{ki} are homogeneous polynomials, we deduce that $\rset^4_{AL}$ has a projective structure as expected.
\end{proof}
\begin{remark}
The map $\sembb$ defined \emph{locally} by  $\mobf^{-1}\times
\bmobf^{-1}\times\tmobf^{-1}$ on $\rset_P^3$ is the so-called `\emph{soldering map}\footnote{The ancillary emitter $\sem$ can also be considered as the ``soldering'' emitter.}' $\sembb$ of Ehresmann defined on  $P\rset^3=\rset_P^3\cup P\rset^2$ to the spacetime manifold $\mathcal{M}$\,:
$$\xymatrix  @!=1ex {
P\rset^3\times P\rset^3 \ar[rrr] \ar[dd] &&& \ar[ddlll]^{\quad\sembb\underset{loc.}{\simeq}{\mobf}^{-1}\times{\bmobf}^{-1}\times{\tmobf}^{-1}} P\rset^3\\
&&&\\
\hspace{4em}\mathcal{M}\underset{loc.}{\simeq}P\rset^3&&&
}$$
And the set of \textit{homogeneous} equations $K^i(s_i,s_i,t,x)=0$ \textit{when the $s_i$ are fixed} defines leaves in the trivial bundle $P\rset^3\times P\rset^3$. After reduction of the bundle $\rset^4\times\rset^4\to\mathcal{M}$ to this projective bundle, the \emph{projective Cartan connection} in the sense of \emph{Ehresmann} \cite{ehres49} is defined as the differential $dK\equiv dK^1\times dK^2\times dK^3$ with respect to the vertical variables $v\simeq(\tan\alpha_e,\tan\bar{\alpha}_e,\tan\tilde{\alpha}_e)$ and the horizontal variables $e_P$\,; and thus, the tangent spaces of these horizontal leaves are the \emph{annihilators/contact elements} of $dK$\,. 
\end{remark}
\begin{remark}
\label{ehrescartan}
Also, as a result, the spacetime manifold can be considered as a `\emph{generalized Cartan space}' which is endowed with both 1) a `\emph{projective Cartan connection}' (of dimension 4) providing a \emph{local} projective structure, and 2) a compatible (pseudo-)Riemannian structure viewed for instance as a horizontal section in the four dimensional anisotropic grid.
\par
Also, we can eventually define a Ehresmann connection providing a horizontal/vertical splitting in the tangent space of the principal bundle of projective frames of the spacetime manifold. And then, once this Ehresmann connection is given, we can define from this splitting a projective Cartan connection\footnote{We can notice that projective Cartan connections differ from Ehresmann connections which are projector fields (in principal bundles) what projective Cartan connections are not; the word `projective' not referring to a projection in a vector space but to the projective geometry/frames. Also, these two connections differ from the notion of Cartan connection in the sense of Ehresmann which is associated with the definition of the soldering map.} which can be viewed as the infinitesimal changes of the projective frames with respect to themselves.
\end{remark}
\section{RLSs in (3+1)-dimensional spacetimes\label{31rls}}
We need similarly four main emitters $\mathcal{E}$, $\overline{\mathcal{E}}$, $\widetilde{\mathcal{E}}$, $\widehat{\mathcal{E}}$ providing a RPS and, again, one ancillary emitter $\mathcal{S}$ emitting its time coordinates and its own time stamp $\rta$ from a clock to get a RLS.

\subsection{The causal structure of the RPS}
The protocol becomes more and more complex to implement. Indeed, sixteen time stamps are needed to provide to the users their positions in a given system of reference. These positions are obtained from the knowledge the users acquire of their own positions and those of the main emitters both in the emission grid and in the system of reference; and this, thanks to the ephemerides that the emitters upload to the users and the auto-locating process. The causal structure of the RPS is the following (Figure~\ref{rps31} and Table~\ref{tab31}): 
\begin{figure}[!ht]
\caption{\label{rps31}The causal structure of the RPS in a (3+1)-dimensional spacetime.}
$$\xy\xymatrix @!=1em {
&&& U_r &&&\\
E' \ar[rrru] && \be' \ar[ur]&& \te' \ar[lu]&&\he'\ar[lllu]\\
&&&&&&\\
E''\ar[uurr] && \be''\ar[uurr] && \te''\ar[uull]&&\he''\ar[uull]\\
**[r]E'''\ar[uuurrrr]&& \be'''\ar[uuurrrr]&& \te'''\ar[uuullll]&&\he'''\ar[uuullll]\\
**[r]E^{(iv)}\ar[uuuurrrrrr]&&\be^{(iv)}\ar[uuuull]&& \te^{(iv)}\ar[uuuurr]&&\he^{(iv)}\ar[uuuullllll]
}
\endxy
$$
\end{figure}
\begin{table}[!ht]
\centering
\caption{\label{tab31}
The events $E'$, $\be'$, $\te'$ and $\he'$ and their broadcast time stamps received at $U_r$.}
\begin{tabular}{ccc}
\hline \hline
Event & \parbox[c]{7em}{\centering broadcasts\\ time stamps}\\ 
\hline 
$E'$& $(\tau'=\tau_{E'},\bta^{(iv)},\tta''',\hta^{(iv)})$ \\
$\be'$& $(\tau'',\bta'=\bta_{\be'},\tta'',\hta''')$ \\
$\te'$& $(\tau''',\bta'',\tta'=\tta_{\te'},\hta'')$ \\
$\he'$& $(\tau^{(iv)},\bta''',\tta^{(iv)},\hta'=\hta_{\he'})$ \\
\hline 
\end{tabular}
\end{table}
\newpage
The position in the emission grid $\rset^4_P$ of the user at $U_r$ is $(\tau',\bta',\tta',\hta')$.
\subsection{The description of the RLS}
As in the $(2+1)$-dimensional case, we need a system of light echoes  associated with each privileged emitter, each linked to an event of reception on the user's worldline. Again, we denote by $\memch$
the system of light echoes for the privileged emitter $\mem$ with $E_p$ as \textit{primary} event. But now, contrarily to the $(2+1)$-dimensional case, we must consider celestial spheres rather than celestial circles. And then, we have again sky mappings from the past null cones directions at the primary events to the ``bright'' points on the associated celestial spheres. Nevertheless, we have only homeomorphisms between hemispheres with half of their boundaries and $\prd$\,. Thus, a problem occurs a priori in this $(3+1)$-dimensional case because we have two disjoint hemispheres for each celestial sphere. And then, consecutive problems appear for the choice and the realization of these hemispheres in the localizing protocol. However, we show in the sequel this problem vanishes completely when considering the full set of echoing systems and the particular hemispheres implementations we present for the emitters. For, we need first the following definition for the determination of the first emission coordinate $\tau_e$\,.
\begin{definition}
\label{echsys31}
\textbf{The echoing system $\memch$} --
The echoing system $\mem{ch}$ associated with the \emph{privileged} emitter $\mem$  is based on the following features (see Figure~\ref{orgedisk}): 
\begin{itemize}
\item one \emph{primary} event $E_p$ with its celestial sphere $\cel_{E_p}$,
\item four \emph{secondary} events $\bebul$, $\tebul$, $\hebul$ with the \emph{ancillary} event $\sembul$, associated respectively with the canonical projective points $[\infty,0]_{E_p}$, $[0,\infty]_{E_p}$, $[0,0]_{E_p}$ and $[1,1]_{E_p}$ of the projective frame $\F_{E_p}$ defining one specific hemisphere of the celestial sphere $\cel_{E_p}$ (Figure~\ref{orgedisk}),
\item one \emph{ternary} events $\esharp$ for $\bebul$\,, two \emph{ternary} events  $\ediam$ and $\bediamd$ for $\tebul$, two \emph{ternary} events  $E''$ and $\besd$ for $\hebul$ and none for $\sembul$\,,
\item two compasses on the specific hemisphere of $\cel_{E_p}$ defined above with a moving origin \emph{anchored} on the projective point $[1,1]_{E_p}$ associated with $\sembul$,
\item one event of reception $U_r\in\wlu$ at which all the data are collected and sent by the emitter $\mem$.
\end{itemize}
\end{definition}
Then, we have the following hierarchy of events in the four different echoing systems $\memch$, $\bemch$, $\temch$ and $\hemch$:
\begin{itemize}
\item Four \textit{primary} events $E_p$, $\be_p$, $\te_p$ and $\he_p$, each with three \textit{secondary} events and one \textit{ancillary} event (Table~\ref{corrproj}):
\newpage 
\begin{table}[ht]
\centering
\begin{tabular}{cccccc}
\hline\hline
\rule[-11pt]{0pt}{28pt}\parbox[c]{4em}{\centering Echoing\\ system}
&\parbox[c]{4em}{\centering Primary\\ event}& $[\infty,0]$ &$[0,\infty]$  & $[0,0]$ & $[1,1]$ \\
\hline
$\memch$&$E_p\in\wl$&$\bebul$&$\tebul$&$\hebul$&$\sembul$\\
$\bemch$&$\be_p\in\wlb$&$\tedag$&$\hedag$&$\edag$&$\semdag$\\
$\temch$&$\te_p\in\wlt$&$\heast$&$\east$&$\beast$&$\semast$\\
$\hemch$&$\he_p\in\wlh$&$E'$&$\be'$&$\te'$&$\sem'$\\
\hline
\end{tabular}
\caption{\label{corrproj}The four primary events and their secondary/ancillary events with their corresponding projective points on the celestial hemispheres $\cel_{E_p}$, $\cel_{\be_p}$, $\cel_{\te_p}$, and $\cel_{\he_p}$ homeomorphic to $P\rset^2$.}
\end{table}
\item Four horismotic relations $E_p\to U_r$, $\be_p\to \bu_r$, $\te_p\to \tu_r$ and $\he_p\to \hu_r$, where the chronologically ordered events of reception $U_r$, $\bu_r$, $\tu_r$ and $\hu_r$ belong to the user worldline $\wlu$\,.
\item One or two (\textit{normal} and \textit{shifted}) \textit{ternary} events by \textit{secondary} event except for the ancillary event:
\begin{subequations}\label{ev24}
\begin{align}
&\memch:& \bebul&: \esharp, & \tebul&: \ediam,\,\bediamd\,, & \hebul&: E'',\,\besd\,,\\
&\bemch:& \tedag&: \besharp, & \hedag&: \bediam,\,\tediamd\,, & \edag&: \be'',\,\tesd\,,\\
&\temch:& \heast&: \tesharp, & \east&: \tediam,\,\hediamd\,, & \beast&:\te'',\,\hesd\,,\\
&\hemch:& E'&: \hesharp, & \be'&: \hediam,\,\ediamd\,, & \te'&: \he'',\,\esd\,,
\end{align}
\end{subequations}
\item Two events associated with the projective points $[\infty,0]$ and $[0,\infty]$ define the equatorial circle dividing the celestial sphere $S^2\simeq P\rset^2\#P\rset^2$ in two celestial hemispheres $\cel$ which are identified to a unique projective space $P\rset^2$. In other words, the directions of propagation of the light rays detected as bright points on the hemispheres are not considered. This could be a problem a priori. Actually, this difficulty is completely canceled out from the operating principles of RLSs as we will see in the sequel.
\item Two compasses on each celestial hemisphere $\cel_{E_p}$, $\cel_{\be_p}$, $\cel_{\te_p}$ and $\cel_{\he_p}$ with a common moving origin for angle measurements \textit{anchored} on the projective point $[1,1]$.
\item We recall that $\sem$ broadcasts as a particular user its own emission coordinates $(\tau_S,\bta_S,\tta_S,\bta_S)$  obtained from the positioning system for all $S\in\wl^\sem$. It broadcasts also all along $\wl^\sem$ its own time coordinate denoted again by~$\rta$\,.
\end{itemize}
\subsection{The causal structure of the RLS}
We represent below (Figures~\ref{echoingmemch}--\ref{fourS} and Tables~\ref{tabechoingmemch}--\ref{tabbisechoingmemch}) only the causal structure for the echoing system $\memch$; the other echoing system $\bemch$, $\temch$ and $\hemch$ can be easily obtained from the symbolic substitutions deduced from Table~\ref{corrproj} and \eqref{ev24}.
\begin{figure}[!ht]
\centering
$$\xymatrix  @!=1ex {
&&&&&&&&&U_r&&\\
*!R{primary}&&&&&&&&&E_p\ar[u]&&\\
&&&&&&&&&&&\\
*!R{secondary}&&\bebul\txtred{([\infty,0])}\ar[uurrrrrrr]&&&\tebul\txtred{([0,\infty])}\ar[uurrrr] &&&\hebul\txtred{([0,0])}\ar[uur] &&&&\\
*!R{ternary}&&\esharp\ar[u]&&\ediam\ar[ur]&&\bediamd\ar[ul] &E''\ar[ur]&&\besd\ar[ul]&&\sembul\txtred{([1,1])}\ar[uuull]\\
&&&&&&&&&&e\ar[uuuul]&\\
}$$
\vskip-1ex
\textit{$\esharp$, $\ediam$ and $E''$ chronologically ordered}, and \textit{$\bediamd$ and $\besd$ chronologically ordered.}
\smallskip
\caption{\label{echoingmemch}The echoing system $\memch$.}
\end{figure}
\begin{table}[!ht]
\centering
\caption{\label{tabechoingmemch}The secondary/ancillary events and their broadcast time stamps in the $\memch$ system.}
\begin{tabular}{ccc}
\hline\hline
Event&\parbox[c]{7em}{\centering broadcasts\\ time stamps}&received at\\
\hline
$\bebul$&$(\tasha,\btabul=\bta_{\bebul})$&$E_p$\\
$\tebul$&$(\tadiam,\btadiamd)$&$E_p$\\
$\hebul$&$(\tau'',\btasd)$&$E_p$\\
$\sembul$&$(\tau_{\sembul},\bta_{\sembul})$&$E_p$\\
\hline
\end{tabular}
\end{table} 
\begin{table}[!ht]
\centering
\caption{\label{tabbisechoingmemch}The ternary events and their broadcast time stamps in the $\memch$ system.}
\begin{tabular}{ccc}
\hline\hline
Event&\parbox[c]{7em}{\centering broadcasts\\ time stamps}&received at\\
\hline
$\esharp$&$\tasha$&$\bebul$\\
$\ediam$&$\tadiam$&$\tebul$\\
$\bediamd$&$\btadiamd$&$\tebul$\\
$E''$&$\tau''$&$\hebul$\\
$\besd$&$\btasd$&$\hebul$\\
\hline
\end{tabular}
\end{table}
\begin{figure}[!ht]
$$\xymatrix  @!=1ex {
&U_r&\bu_p&\tu_p&\hu_p&&
\\
&E_p\ar[u]&\be_p\ar[u]&\te_p\ar[u]&\he_p\ar[u]&&
\\
&&&&&&
\\
e\ar[uur]\ar[uurr]\ar[uurrr]\ar[uurrrr]&&\sembul\ar[uul]&\ll&\semdag\ar[uull]&\ll& \semast\ar[uulll]&\ll&\sem'\ar[uullll]
}$$
\caption{\label{fourS}The causal structure for the four echoing systems $\memch$, $\bemch$, $\temch$ and $\hemch$ with the event~$e$. The chronological order between $\sembul$, $\semdag$, $\semast$ and $\sem'$ belonging to $\wl^\sem$ can be different.}
\end{figure}
\newpage
\subsection{The projective frame, the time stamps correspondence and the consistency}
We consider the projective frame at the primary event $E_p$ and the time stamps correspondence associated with the change of projective frame on $\cel_{E_p}$ (Table~\ref{chgeproj} and Figure~\ref{orgedisk}). Obviously, the other correspondences and changes of projective frames can be deduced in the same way for the three other primary events. From, we obtain four corresponding  pairs of time coordinates for $e$ in the four celestial hemispheres (Table~\ref{eighttim}).
%

\begin{table}[ht]
\centering
\caption{\label{chgeproj}The change of projective frame and the corresponding events.}
\begin{tabular}{ccccc}
\hline \hline
Event &\mbox{\qquad}& $\F_{E_p}$ &\mbox{\qquad}& $\F^\tau_{E_p}$\\ 
\hline 
$e$&& $[\tan\alpha_e,\tan\beta_e]$&& $[\tabul_e,\btabul_e]$\\ 
$\bebul$&& $[\infty,0]$ && $[\tasha,\btabul=\bta_{\bebul}]$\\
$\tebul$&& $[0,\infty]$ && $[\tadiam,\btadiamd]$\\ 
$\hebul$&& $[0,0]$ && $[\tau'',\btasd]$\\ 
$\sembul$&& $[1,1]$ && $[\tau_{\sembul},\bta_{\sembul}]$\\
\hline 
\end{tabular}
\end{table}
\begin{table}[ht]
\centering
\caption{\label{eighttim}The pairs of time coordinates for $e$ deduced in the four celestial hemispheres.}
\begin{tabular}{ccc}
\hline \hline
\parbox[c]{5em}{\centering Celestial\\
hemisphere}&\parbox[c]{5em}{\centering Time coordinates for $e$}\rule[-16pt]{0pt}{38pt}&\\ 
\hline 
$\cel_{E_p}$&$(\tabul_e,\btabul_e)$&\\
$\cel_{\be_p}$&$(\btadag_e,\ttadag_e)$&\\
$\cel_{\te_p}$&$(\ttaast_e,\htaast_e)$&\\
$\cel_{\he_p}$\rule[-7pt]{0pt}{0pt}&$(\hta'_e,\tau'_e)$&\\
\hline 
\end{tabular}
\end{table}
\begin{figure}[!ht]
\centering
\caption{\label{orgedisk}The projective disk on the celestial hemisphere $\cel_{E_p}$ centered at $E_p$  and the four canonical projective points and the corresponding projective point for $e$\,.}
\begin{tikzpicture}[scale=.9]
\fill[
left color=yellow!50!orange,
right color=yellow, 
shading=axis,
fill opacity=1
](0,0) circle [radius=100pt];
\fill[
left color=yellow,
right color=red,
shading=axis,
opacity=.175
](0,0) circle [radius=30pt];
\fill[color=black,](0,0) node[anchor=south west]{\Large$E_p$} circle [radius=2pt];
\draw[color=red,line width=2pt,opacity=1](0,0) +(80:100pt) arc [radius=100pt,start angle=80,delta angle=180];
\fill[color=red](0,0) +(80:100pt) circle [radius=3pt];
\draw[color=red,line width=1.5pt,line cap=round](0,0) +(262:97pt) arc [radius=3pt,start angle=80,delta angle=180];
\draw[blue,line width=1.25pt,arrows=-{Stealth[width=10pt,sep=20pt]Circle[length=8pt,open,blue,fill=yellow,line width=1.25pt]}]
(0,0) (120:160pt) node[black,anchor=north east]{\Large$\bebul$} -- (120:96pt) node[black,anchor=south east]{\Large$[\infty,0]$\,\,\,\,};
\draw[blue,line width=1.25pt,arrows=-{Stealth[width=10pt,sep=20pt]Circle[length=8pt,open,blue,fill=yellow,line width=1.25pt]}]
(0,0) (200:160pt) node[black,anchor=north west]{\Large$\tebul$} -- (200:96pt) node[black,anchor=south east]{\Large$[0,\infty]$\,\,\,\,};
\draw[blue,line width=1.25pt,arrows=-{Stealth[width=10pt,sep=70pt]Circle[length=8pt,open,blue,fill=yellow,line width=1.25pt]}]
(0,0) (250:160pt) node[black,anchor=south east]{\Large$e$} -- (250:50pt) node[black,anchor=north west]{\Large$[\tan\alpha_e,\tan\beta_e]$};
\draw[blue,line width=1.25pt,arrows=-{Stealth[width=10pt,sep=102pt]Circle[length=8pt,open,blue,fill=yellow,line width=1.25pt]}]
(0,0) (330:160pt) node[black,anchor=north east]{\Large$\sembul$} -- (330:20pt) node[black,anchor= west]{\Large\quad$[1,1]$};
\draw[blue,line width=1.25pt,arrows=-{Stealth[width=10pt,sep=42pt]Circle[length=8pt,open,blue,fill=yellow,line width=1.25pt]}]
(0,0) (60:160pt)node[black,anchor=north west]{\Large$\hebul$} --  (60:80pt) node[black,anchor=south west]{\Large\quad$[0,0]$};
\end{tikzpicture}
\end{figure}
\newpage
Then, the change of projective frame on the celestial hemisphere gives the following relation:
\begin{equation}\label{chprof31}
\begin{pmatrix}
a&d&g\\
b&e&h\\
c&f&k
\end{pmatrix}
\begin{pmatrix}
1&0&0&1&\rho\,\tan\alpha_e\\
0&1&0&1&\rho\,\tan\beta_e\\
0&0&1&1&\rho\,
\end{pmatrix}=
\begin{pmatrix}
u\,\tasha&v\,\tadiam&w\,\tau''&(u+v+w)\,\tau_{\sembul}&r\,\tabul_e\\
u\,\btabul&v\,\btadiamd&w\,\btasd&(u+v+w)\,\bta_{\sembul}&r\,\btabul_e\\
u&v&w&(u+v+w)&r
\end{pmatrix},
\end{equation}
where $\rho\,\,u\,v\,w\,r\,(u+v+w)\neq0$, and where the determinant of the square matrix on the \textit{l.h.s.} of this equality must be non-vanishing.
Then, we deduce that
\begin{equation}\label{chprof31b}
\begin{pmatrix}
u\,\tasha&v\,\tadiam&w\,\tau''\\
u\,\btabul&v\,\btadiamd&w\,\btasd\\
u&v&w
\end{pmatrix}
\begin{pmatrix}
1&\rho\,\tan\alpha_e\\
1&\rho\,\tan\beta_e\\
1&\rho\,
\end{pmatrix}=
r\,(u+v+w)
\begin{pmatrix}
\tau_{\sembul}&\tabul_e\\
\bta_{\sembul}&\btabul_e\\
1&1
\end{pmatrix},
\end{equation}
and we can take in addition $u+v+w=r=1$.
\par
Obviously, we obtain  from the three other echoing systems three other similar systems of equations for the other six unknown time coordinates given in Table~\ref{eighttim} for $e$.
\par
Now, besides, we have necessarily the relations:
\begin{equation}\label{taueight}
\tabul_e=\tau'_e\,,
\qquad
\btabul_e=\btadag_e\,,
\qquad
\ttaast_e=\ttadag_e\,,
\qquad
\htaast_e=\hta'_e\,.
\end{equation}
Indeed, if one of these four precedent equalities is not satisfied, then it means that if the event $e$, the worldlines of the main emitters and the ancillary one are fixed, then, at least one time stamp among the eight can vary. And then, one of the eight angles on the four celestial hemispheres  necessarily can vary as well. But then, it would mean that the position of the event $e$ seen on the celestial hemispheres of the four main emitters can vary arbitrarily whenever $e$ is fixed. In other words, $e$ might have more than one corresponding ``bright'' point on each celestial hemisphere; and, in particular, because we have continuous functions, then it might correspond to $e$, in particular, a connected ``bright'' line on one of the four celestial hemispheres. This would involve necessarily the existence of more than one and only one horismotic relation `\,$\to$.' This situation can be encountered in the case of the existence of conjugate points for light-like geodesics for instance in Riemannian manifolds. Then, considering only one horismos, the relations \eqref{taueight} must be satisfied.
\par
Then, we obtain:
\begin{lemma}
\label{lemauto31}
Let $\fourtor\equiv(\pru)^4$ be the 4-torus. Then,
the RLS provides a map
\begin{equation}\label{autol4}
\mathfrak{M}_4:(\tan\alpha_e,\tan\bar{\alpha}_e,\tan\tilde{\alpha}_e,\tan\hat{\alpha}_e)\in\fourtor
\longrightarrow(\tau_e,\bta_e,\tta_e,\hta_e)\in\fourtor
\end{equation}
which is an automorphism.
\end{lemma}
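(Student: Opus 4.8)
The plan is to reproduce, for the four \emph{coupled} echoing systems $\memch,\bemch,\temch,\hemch$, the mechanism that made the analogous $(2+1)$-dimensional statement (the automorphism $\sembb$) immediate, the genuine new ingredient being the cyclic identifications \eqref{taueight}. The first step is to read off from the change of projective frame \eqref{chprof31b} one collineation of $\prd$ per echoing system. Normalising $u+v+w=r=1$ in \eqref{chprof31b}, its first column says exactly that $(u,v,w)$ are the barycentric coordinates of $(\tau_{\sembul},\bta_{\sembul},1)$ with respect to $(\tasha,\btabul,1)$, $(\tadiam,\btadiamd,1)$, $(\tau'',\btasd,1)$, so $(u,v,w)$ are uniquely fixed --- and nonzero --- by the non-degeneracy conditions already imposed on \eqref{chprof31}; with $(u,v,w)$ fixed the left matrix of \eqref{chprof31b} is a fixed invertible $3\times 3$ matrix, and its second column exhibits $(\tan\alpha_e,\tan\beta_e)\mapsto(\tabul_e,\btabul_e)$ as a homography $H_1\in\pglt$ of $\prd$. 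Doing the same for $\bemch,\temch,\hemch$ with the substitutions read off Table~\ref{corrproj} and \eqref{ev24} yields collineations $H_2,H_3,H_4$ of $\prd$, each a real-analytic diffeomorphism, with $H_1:(\tan\alpha_e,\tan\beta_e)\mapsto(\tabul_e,\btabul_e)$, $H_2:(\tan\balpha_e,\tan\bbeta_e)\mapsto(\btadag_e,\ttadag_e)$, $H_3:(\tan\talpha_e,\tan\tbeta_e)\mapsto(\ttaast_e,\htaast_e)$, and $H_4:(\tan\halpha_e,\tan\hbeta_e)\mapsto(\hta'_e,\tau'_e)$.

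The second step is to impose \eqref{taueight}. These four equalities say precisely that, reading outputs as ordered pairs, the second entry of $H_1$ equals the first entry of $H_2$, that of $H_2$ the first of $H_3$, that of $H_3$ the first of $H_4$, and that of $H_4$ the first of $H_1$; writing $\tau_e,\bta_e,\tta_e,\hta_e$ for these four common values, the homographies output $(\tau_e,\bta_e)$, $(\bta_e,\tta_e)$, $(\tta_e,\hta_e)$, $(\hta_e,\tau_e)$ respectively. Given the four independent angles $(\tan\alpha_e,\tan\balpha_e,\tan\talpha_e,\tan\halpha_e)$, eliminating the $\tau$'s turns the gluing into four relations among the remaining angles; since, for a fixed first angle, each component of an $H_i$ is a M{\"o}bius function of the second, these relations chain up around the cycle $H_1\to H_2\to H_3\to H_4\to H_1$ to a single fixed-point equation $\tan\beta_e=A(\tan\beta_e)$ for a M{\"o}bius transformation $A\in\pgld$ built from the broadcast time stamps. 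One expects the two roots of this equation to be exactly the two determinations attached to the two celestial hemispheres of each sphere (the ambiguity flagged after \eqref{ev24}); the operating principles of the RLS --- the causal/horismotic data selecting which bright point is actually detected, deferred to the sequel in the text --- single out the admissible root, whence $\tan\beta_e$ and, by the chain, $\tan\bbeta_e,\tan\tbeta_e,\tan\hbeta_e$, hence $(\tau_e,\bta_e,\tta_e,\hta_e)$. This defines $\mathfrak{M}_4$ on $\fourtor$. For the inverse one runs the collineations backwards, $(\tan\alpha_e,\tan\beta_e)=H_1^{-1}(\tau_e,\bta_e)$, $(\tan\balpha_e,\tan\bbeta_e)=H_2^{-1}(\bta_e,\tta_e)$, and so on; the gluing \eqref{taueight} then holds automatically because the four common $\tau$-values were prescribed, so $\mathfrak{M}_4^{-1}$ is well-defined, and since every $H_i^{\pm1}$ is a collineation of $\prd$ both $\mathfrak{M}_4$ and $\mathfrak{M}_4^{-1}$ are real-analytic; hence $\mathfrak{M}_4$ is an automorphism of $\fourtor\equiv(\pru)^4$.

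The step I expect to be the genuine obstacle is exactly this single-valuedness in the second step: one must show that, for every admissible configuration of the emitter worldlines and of $e$, the cyclic M{\"o}bius problem selects \emph{one} determination --- equivalently, that the two-hemisphere indeterminacy is completely removed by the full set of echoing systems, as the text promises but postpones. Concretely this requires (i) that the composite map $A$ is never the identity on admissible data, so that no one-parameter family of solutions survives, and (ii) a genuinely geometric argument, using that there is one and only one horismos per past null direction --- the very hypothesis invoked to justify \eqref{taueight} --- distinguishing the physically detected fixed point from its hemispheric partner. Everything else (the reduction to the $H_i$, their invertibility from the determinant hypothesis of \eqref{chprof31}, and the real-analyticity of $\mathfrak{M}_4^{\pm1}$) is routine and parallels the $(2+1)$-dimensional argument.
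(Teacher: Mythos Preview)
Your approach and the paper's diverge at the second step. The paper does not set up a cyclic fixed-point problem for $\tan\beta_e$. Instead it writes the echoing relations at $E_p$ and $\te_p$ as homogeneous bilinear equations \eqref{kete31} in $x\simeq(\tau_e,\bta_e,\tta_e,\hta_e,1)$ and $t\simeq(\tan\alpha,\tan\beta,\tan\talpha,\tan\tbeta,1)$, observes that these four equations already determine $x$ from $t$, and then argues that the analogous equations \eqref{kbehe31} at $\be_p,\he_p$ are \emph{linearly} dependent on \eqref{kete31}; combining these linear dependencies with the non-time-stamp equations in \eqref{chprof31b}--\eqref{chprof31t} yields directly the decoupled system \eqref{k1234}, where each equation $H^i_{\mu\nu}x^\mu t_i^\nu=0$ is a M{\"o}bius relation between one $\tan\alpha$ and a fixed linear combination of the $\tau$'s. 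The automorphism statement then follows exactly as in the $(2+1)$-dimensional case.

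The gap you flag in your own proposal is real, and it is precisely what the paper's route is designed to avoid. Your cyclic chain $H_1\to H_2\to H_3\to H_4\to H_1$ produces a M{\"o}bius fixed-point equation, hence generically two solutions, and you defer the selection of the physical root to ``operating principles'' you do not supply. The paper instead uses that \emph{two} non-adjacent echoing systems ($\memch$ and $\temch$) already give a bijection $(\tan\alpha,\tan\beta,\tan\talpha,\tan\tbeta)\leftrightarrow(\tau_e,\bta_e,\tta_e,\hta_e)$ (this is just $H_1\times H_3$ in your notation), so the remaining identifications \eqref{taueight} impose \emph{linear} constraints relating $t$ and $t'$ rather than a quadratic closure condition. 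That is the missing idea: break the cycle by solving two opposite systems first, then use the other two only to trade $(\tan\beta,\tan\tbeta)$ for $(\tan\balpha,\tan\halpha)$ linearly. With this order of elimination no hemisphere ambiguity ever appears, and you never need the geometric selection argument you postponed.
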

\begin{proof}
This lemma can be easily proved simply by solving systems of equations like \eqref{chprof31b} but we indicate interesting intermediate homogeneous equations in the computations.
From the relations \eqref{taueight} and the equations at each primary event such as the equations \eqref{chprof31b} at $E_p$, we deduce that there are four linear relations between the ``$\tan\alpha$'' and the ``$\tan\beta$\,.'' And then, it can be shown that we obtain four M{\"o}bius relations linking the four $\tan\alpha$'s to the four time coordinates $\tau_e$, $\bta_e$, $\tta_e$ and $\hta_e$ of $e$ generalizing the situation encountered in the precedent (2+1)-dimensional case. 
\par
More precisely, considering the primary event $E_p$\,, we obtain the equations \eqref{chprof31b}. At the other primary event $\te_p$, we obtain the similar following relations ($\nu\,\neq0)$:
\begin{equation}\label{chprof31t}
\begin{pmatrix}
p\,\ttasha&q\,\ttadiam&m\,\tta''\\
p\,\htaast&q\,\htadiamd&m\,\htasd\\
p&q&m
\end{pmatrix}
\begin{pmatrix}
1&\nu\,\tan\talpha_e\\
1&\nu\,\tan\tbeta_e\\
1&\nu\,
\end{pmatrix}=
n\,(p+q+m)
\begin{pmatrix}
\tta_{\semast}&\ttaast_e\\
\hta_{\semast}&\htaast_e\\
1&1
\end{pmatrix},
\end{equation}
where again we can impose the relations $n=p+q+m=1$\,. Then, from now and throughout, we set:
\begin{equation}\label{taueightfour}
\tau_e\equiv\tabul_e=\tau'_e\,,
\qquad
\bta_e\equiv\btabul_e=\btadag_e\,,
\qquad
\tta_e\equiv\ttaast_e=\ttadag_e\,,
\qquad
\hta_e\equiv\htaast_e=\hta'_e\,.
\end{equation}
And then, it can be shown that among the relations \eqref{chprof31b} and \eqref{chprof31t}, those depending explicitly on the time stamps can be put in the following forms ($p,q=1,\ldots,4$ and $\mu,\nu=0,1,\ldots,4$):
\begin{subequations}\label{kete31}
\begin{align}
&K_{pq\mu\nu}\,s^p\,S^q\,x^\mu\,t^\nu=0\,,
&
\overline{K}_{pq\mu\nu}\,\bar{s}^p\,S^q\,x^\mu\,t^\nu=0\,,
\\
&\widetilde{K}_{pq\mu\nu}\,\tilde{s}^p\,\widetilde{S}^q\,x^\mu\,t^\nu=0\,,
&
\widehat{K}_{pq\mu\nu}\,\hat{s}^p\,\widetilde{S}^q\,x^\mu\,t^\nu=0\,,
\end{align}
\end{subequations}
where
%
%
\begin{subequations}
\begin{align}
x&\equiv\left(\rta_e\,\tau_e,\rta_e\,\bta_e,\rta_e\,\tta_e,\rta_e\,\hta_e,\rta_e\right),&
t&\equiv\left(\mu\,\tan\alpha,\mu\,\tan\beta,\mu\,\tan\talpha,\mu\,\tan\tbeta,\mu\right), 
\\
s&\equiv\left(\tasha,\tadiam,\tau'',s^0\right), & \bar{s}&\equiv\left(\btabul,\btadiamd,
\btasd,\bar{s}^0=s^0\right), \\
\tilde{s}&\equiv\left(\ttasha,\ttadiam,
\tta'',\tilde{s}^0\right), & \hat{s}&\equiv\left(\htaast,\htadiamd,
\htasd,\hat{s}^0=\tilde{s}^0\right),
\end{align}
\end{subequations}
and
\begin{subequations}
\begin{align}
&S\equiv\left(
\begin{vmatrix}
\tau_{\sembul}& \tadiam & \tau'' \\ 
\bta_{\sembul}& \btadiamd & \btasd \\ 
1 & 1 & 1
\end{vmatrix},
\begin{vmatrix}
\tasha& \tau_{\sembul} & \tau'' \\ 
 \btabul& \bta_{\sembul} & \btasd \\ 
1 & 1 & 1
\end{vmatrix},
\begin{vmatrix}
\tasha& \tadiam&\tau_{\sembul}  \\ 
 \btabul& \btadiamd&\bta_{\sembul} \\ 
1 & 1 & 1
\end{vmatrix},
\begin{vmatrix}
\tasha& \tadiam&\tau''  \\ 
 \btabul& \btadiamd&\bta'' \\ 
1 & 1 & 1
\end{vmatrix}
\right),
\\
&\widetilde{S}\equiv\left(
\begin{vmatrix}
\tta_{\semast}& \ttadiam & \tta'' \\ 
\hta_{\semast}& \htadiamd & \htasd \\ 
1 & 1 & 1
\end{vmatrix},
\begin{vmatrix}
\ttasha& \tta_{\semast} & \tta'' \\ 
\htaast& \hta_{\semast} & \htasd \\ 
1 & 1 & 1
\end{vmatrix},
\begin{vmatrix}
\ttasha& \ttadiam&\tta_{\semast}  \\ 
\htaast& \htadiamd&\hta_{\semast} \\ 
1 & 1 & 1
\end{vmatrix},
\begin{vmatrix}
\ttasha& \ttadiam&\tta''  \\ 
\htaast& \htadiamd&\htasd \\ 
1 & 1 & 1
\end{vmatrix}
\right),
\end{align}
\end{subequations}
where $s^0$ and $\tilde{s}^0$ are non-vanishing arbitrary coefficients and $\mu\neq0$\,.
Hence, we have four homogeneous algebraic equations linking the vectors $t$ and $x$\,.
Obviously, we have also four other similar homogeneous equations for $x$ and an other $t'\simeq(\tan\balpha,\tan\bbeta,\tan\halpha,\tan\hbeta)$ deduced from the echoing systems at the other two primary events $\be_p$ and $\he_p$:
\begin{subequations}\label{kbehe31}
\begin{align}
&K_{pq\mu\nu}\,s'^p\,\overline{S}^q\,x^\mu\,t'^\nu=0\,,
&
\overline{K}_{pq\mu\nu}\,\bar{s}'^p\,\overline{S}^q\,x^\mu\,t'^\nu=0\,,
\\
&\widetilde{K}_{pq\mu\nu}\,\tilde{s}'^p\,\widehat{S}^q\,x^\mu\,t'^\nu=0\,,
&
\widehat{K}_{pq\mu\nu}\,\hat{s}'^p\,\widehat{S}^q\,x^\mu\,t'^\nu=0\,,
\end{align}
\end{subequations}
and where $s'^0=\bar{s}'^0$ and $\tilde{s}'^0=\hat{s}'^0$ are non-vanishing arbitrary coefficients.
And then, because $x$ is determined completely from the equations \eqref{kete31}, the equations  \eqref{kbehe31} are linearly depending on the equations \eqref{kete31} which involves that we have linear relations between the two set of ``angles'' $t\simeq(\tan\alpha,\tan\beta,\tan\talpha,\tan\tbeta)$ and $t'\simeq(\tan\balpha,\tan\bbeta,\tan\halpha,\tan\hbeta)$. Hence, taking linear combinations of the systems of equations \eqref{kbehe31} and \eqref{kete31} and taking into account also the remaining equations in \eqref{chprof31b} and \eqref{chprof31t} not depending on the time stamps we can deduce a system of four homogeneous equations linking $(\tan\alpha,\tan\balpha,\tan\talpha,\tan\halpha)$ and $x$:
\begin{subequations}
\begin{equation}\label{k1234}
H^1_{\mu,\nu}\,x^\mu\,t_1^\nu=0\,,
\qquad
H^2_{\mu,\nu}\,x^\mu\,t_2^\nu=0\,,
\qquad
H^3_{\mu,\nu}\,x^\mu\,t_3^\nu=0\,,
\qquad
H^4_{\mu,\nu}\,x^\mu\,t_4^\nu=0\,,
\end{equation}
where $s^0$, $\tilde{s}^0$, $s'^0$ and  $\tilde{s}'^0$ do not intervene anymore and 
\begin{align}
t_1&\equiv\left(\mu_1\,\tan\alpha,0,0,0,\mu_1\right), &
t_2&\equiv\left(0,\mu_2\,\tan\balpha,0,0,\mu_2\right), \\
t_3&\equiv\left(0,0,\mu_3\,\tan\talpha,0,\mu_3\right), &
t_4&\equiv\left(0,0,0,\mu_4\,\tan\halpha,\mu_4\right).
\end{align}
\end{subequations}
And these equations \eqref{k1234} determine univocally $x$ up to the time coordinate $\rta_e$, \ie, $(\tau_e,\bta_e,\tta_e,\hta_e)$. Moreover, the equations \eqref{k1234} are other expressions for M{\"o}bius transformations between each given angle and a linear combination of  $(\tau_e,\bta_e,\tta_e,\hta_e)$; hence the result for an automorphism on the 4-torus.
\end{proof}
\begin{remark}
We can notice from the Lemma~\ref{lemauto31} that we obtain the time coordinates $(\tau_e,\bta_e,\tta_e,\hta_e)$ for $e$ from only two echoing systems, \eg, at $E_p$ and $\te_p$ with the four ``angles'' $(\tan\alpha,\tan\beta,\tan\talpha,\tan\tbeta)$, or from the four echoing systems at $E_p$, $\be_p$, $\te_p$ and $\he_p$ with the four ``angles'' $(\tan\alpha,\tan\balpha,\tan\talpha,\tan\halpha)$.
\end{remark}
\begin{theorem}\label{consist31}
The localization and the positioning protocols or systems in a (3+1)-dimensional spacetime are consistent.
\end{theorem}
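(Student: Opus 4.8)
The plan is to reproduce, step by step, the case analysis that established Theorem~\ref{consist21}, now with four main emitters and with the richer hierarchy of secondary, ancillary and normal/shifted ternary events recorded in Table~\ref{corrproj} and \eqref{ev24}. As there, consistency only needs to be checked when the localized event $e$ already lies on one of the worldlines $\wl$, $\wlb$, $\wlt$, $\wlh$ or $\wl^\sem$, so I would organize the proof by the role $e$ plays in the four echoing systems: (1)~$e\in\wl^\sem$; (2)~$e$ a \emph{primary} event; (3)~$e$ a \emph{secondary} event; (4)~$e$ a (normal or shifted) \emph{ternary} event. The recurring ingredients are the causal axiomatics of \cite{Kronheimer1967} (conditions (V) and (VII), and Lemma~1-1), the auto-locating property of Remark~\ref{rk4} (the stamp a main emitter broadcasts is one of its own emission coordinates), the pairwise identities \eqref{taueightfour} among the eight time coordinates of Table~\ref{eighttim}, and the \emph{identification} of Definition~\ref{iden}.

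Two of the four cases are immediate. If $e\in\wl^\sem$, then, exactly as in Case~1 of the proof of Theorem~\ref{consist21}, Figure~\ref{fourS} together with, say, $\sembul\ll e\to E_p$ and conditions (V), (VII), Lemma~1-1 force $\sembul\to e\to E_p$, hence $\sembul=e$; repeating with $\semdag,\semast,\sem'$ and $\be_p,\te_p,\he_p$ gives $e=\sembul=\semdag=\semast=\sem'\equiv S$, so that $e$ occupies each of the four canonical points $[1,1]$, every projective relation of the form \eqref{chprof31b} forces the angles $\tan\alpha_e,\tan\beta_e,\dots$ all to equal $1$, and the four pairs of Table~\ref{eighttim} collapse to the RPS coordinates $(\tau_S,\bta_S,\tta_S,\hta_S)$. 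If $e$ is a ternary event, say $e=\ediam\in\wl$, then localizing it would require a horismos $e\to E_p'$ with $E_p'\in\wl$ as well, contradicting the fact that only the chronological order subsists between distinct events on a single emitter's worldline (condition (VII)); this is Case~4 of the earlier proof verbatim.

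The substantive work is in cases (2) and (3), which I would carry out along the lines of Cases~2 and~3 of the proof of Theorem~\ref{consist21}. If $e=E_p\in\wl$, then by Table~\ref{corrproj} the event $e$ is simultaneously the secondary event occupying the canonical points $[0,0]$, $[0,\infty]$, $[\infty,0]$ of the hemispheres $\cel_{\be_p}$, $\cel_{\te_p}$, $\cel_{\he_p}$; the attached angles being $0$ or $\infty$, the M\"obius relations of Lemma~\ref{lemauto31} return $\bta_e,\tta_e,\hta_e$ directly, and by Remark~\ref{rk4} these equal the RPS coordinates $\bta_{E_p},\tta_{E_p},\hta_{E_p}$, while $\tau_e=\tau_{E_p}$, broadcast by the physically identified emitter $\mem$, is fixed by \emph{identification} as in Case~2 of the proof of Theorem~\ref{consist21}. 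If instead $e=\bebul\in\wlb$, then $\memch$ reads $e$ at $[\infty,0]$, so the first block of \eqref{chprof31b} gives $\btabul_e=\bta_{\bebul}$ and $\tabul_e=\tasha$, both agreeing with the RPS coordinates of $\bebul$ relative to $\mem$ (the ternary events of \eqref{ev24} broadcasting precisely these stamps); the remaining echoing systems in which $\wlb$ contributes a secondary event pin down the other coordinates in the same way, and any coordinate left over is obtained, as in the earlier Case~3, from two distinct causal structures in which $e$ occurs as two different secondary events of $\wlb$, together with \emph{identification}. The worldlines $\wlt$ and $\wlh$ follow by the symbolic substitutions underlying Table~\ref{corrproj}.

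I expect the one genuine difficulty to be combinatorial: keeping exact track, in case (3), of which of the eight time coordinates of Table~\ref{eighttim} attached to $e$ are pinned down by localization through the $0/\infty$ cross-ratios of $\memch$, $\bemch$, $\temch$, $\hemch$, and which must instead be supplied by \emph{identification}; and, alongside this, verifying that the two-hemisphere ambiguity of the celestial spheres $S^2\simeq\prd\#\prd$ --- the one feature truly new relative to the $(2+1)$-dimensional case --- never actually intervenes. The latter is already neutralized by Lemma~\ref{lemauto31}, so that consistency reduces once more to the stamp-matching bookkeeping sketched above.
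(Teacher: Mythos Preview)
Your proposal is correct, but it does substantially more work than the paper's own proof, which is a two-line reduction rather than a case analysis. The paper argues that the $(3+1)$-dimensional RLS causal structure decomposes into four causal sub-structures, each equivalent to the $(2+1)$-dimensional one, and that by Lemma~\ref{lemauto31} and the remark following it only the four ``$\alpha$''~angles $(\tan\alpha,\tan\balpha,\tan\talpha,\tan\halpha)$ are needed to localize~$e$; consistency then follows immediately from Theorem~\ref{consist21} applied to each sub-structure. In other words, the paper first reduces the eight angles $(\alpha,\beta,\balpha,\bbeta,\talpha,\tbeta,\halpha,\hbeta)$ to four via the identities~\eqref{taueight}--\eqref{taueightfour}, and only then invokes the earlier theorem, so no fresh casework is required.

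Your approach, by contrast, replays the four cases of the proof of Theorem~\ref{consist21} directly in the $(3+1)$-dimensional hierarchy of Table~\ref{corrproj} and~\eqref{ev24}. This is sound --- the arguments you sketch for the ancillary, primary, secondary and ternary cases all go through --- and has the merit of making the stamp-matching completely explicit, including the role of the \emph{shifted} ternary events and the pairs of Table~\ref{eighttim}. The cost is precisely the combinatorial bookkeeping you yourself flag: tracking, for each secondary event, which of the eight coordinates are fixed by the $0/\infty$ projective points and which require \emph{identification}. The paper sidesteps this entirely by observing that once Lemma~\ref{lemauto31} reduces localization to the four $\alpha$-angles, each echoing system behaves exactly as in the $(2+1)$-dimensional case, and the hemisphere ambiguity you mention is already absorbed into that lemma. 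If you want to shorten your write-up, state that reduction first and then simply cite Theorem~\ref{consist21}.
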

\begin{proof}
Obvious, because 1) the RLS in the (3+1)-dimensional case has a causal structure which can be decomposed in four causal sub-structures each equivalent to the one given for the RLS in the (2+1)-dimensional case, and 2) we need only the ``angles'' $\alpha$ to localize $e$ in each of these sub-systems of localization.
\end{proof}
\subsection{The local projective structure}
\begin{definition}\label{grids31}
We call
\begin{itemize}
\item \emph{Emission grid}, the Euclidean space $\rset^4_P\equiv\rset^4$ of the positioned events $\ep=(\tau_e,\bta_e,\tta_e,\hta_e)$,
\item \emph{Localization (\emph{or} pentametric) grid}, the Euclidean space $\rset^5_L\equiv\rset^4\times\rset^\ast$ of the localized events $\el=(\tau_e,\bta_e,\tta_e,\hta_e,\rta_e)$ where $\rta_e$ is provided by the ancillary emitter $\sem$ by \emph{identification} from the horismotic relation $S^L\to\ep$ $(S^L\in\wl^\sem)$ or the `\emph{message function}' \cite{Woodhouse73} $f_\sem^-:\rset^4_P\longrightarrow\wl^\sem$, \ie, the time stamp $\rta^L$ broadcast by $\sem$ at $S^L$ is such that $\rta^L\equiv\rta_e$, and
\item \emph{Anisotropic localization (\emph{or} pentametric) grid}, the Euclidean space $\rset^5_{AL}\equiv\rset^4\times\rset^\ast$ of  events $\eal=(\rta_e\tau_e,\rta_e\bta_e,\rta_e\tta_e,\rta_e\hta_e,\rta_e)$.
\end{itemize}
\end{definition}
\begin{definition}
We denote by $I:\rset_L^5\longrightarrow
\rset_{AL}^5$\,, the bijective map such that $I(\el)=\eal$. And we denote by $\pi:\rset_{AL}^5\longrightarrow
\rset_{P}^4$\,, the submersion such that $\pi(\eal)=\ep$.
\end{definition}
Let $g$ be an element of $GL(5,\rset)$ such that $g\,.\,\eal=\eal'$\,. And thus, $GL(5,\rset)$ acts linearly on $\rset_{AL}^5$. Then, the action of $GL(5,\rset)$ on $\rset_{L}^5$ and $\rset_P^4$ defines homographies (\ie, conformal transformations):
\begin{subequations}\label{hombis}
\begin{align}
&\ep'=\left(\frac{A\,.\,\ep+b}{c\,.\,\ep+\mu}\right),
\qquad
g\equiv\begin{pmatrix}
A& b \\ 
 {}^ tc&\mu
\end{pmatrix},\label{epabis}
\\
&\rta'_e=\rta_e(c\,.\,\ep+\mu)\,,
\end{align}
\end{subequations}
where $\mu\in\rset$\,, $(b,\,c)\in(\rset^4)^2$ and $A\in M_{4\times4}(\rset)$\,. Therefore, we obtain:
\begin{theorem}\label{pgl5}
The (3+1)-dimensional spacetime manifold has a \emph{local} 4-dimensional projective structure inherited from its causal structure.
\end{theorem}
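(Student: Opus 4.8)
The plan is to transcribe the architecture of the proof of Theorem~\ref{pgl4}, now in one dimension higher. First I would recall the homogeneous description produced inside the proof of Lemma~\ref{lemauto31}: the localized event is represented in $\rset^5_{AL}$ by the vector $x\equiv\eal=(\rta_e\tau_e,\rta_e\bta_e,\rta_e\tta_e,\rta_e\hta_e,\rta_e)$, together with the companion angle vector $t$, the secondary-event vectors $s,\bar s,\tilde s,\hat s$ and the two vectors $S,\widetilde S$ assembled from the fixed time stamps on $\wl,\wlb,\wlt,\wlh$ and $\wl^\sem$. The content of that lemma is exactly that the relations \eqref{kete31} and \eqref{kbehe31} --- or, after elimination, the four M\"obius-type relations \eqref{k1234} --- are \emph{homogeneous} algebraic equations in $x$ and the angle vectors, and that they pin down $x$ up to the global factor $\rta_e$, which is precisely the rescaling that the submersion $\pi:\rset^5_{AL}\to\rset^4_P$ quotients away.

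Second, I would run the group-action argument. $GL(5,\rset)$ acts linearly on $\rset^5_{AL}$, and --- by the same computation sketched in the proof of Theorem~\ref{pgl4} --- the tensors $K,\overline K,\widetilde K,\widehat K$ (and hence $H^1,\dots,H^4$) are equivariant: for all $g_x,g_t\in GL(5,\rset)$ there is $g\in GL(5,\rset)$ with $K(g\cdot s,g\cdot s,g_x\cdot x,g_t\cdot t)=0$, and likewise for the other three tensors. Fixing the $S,\widetilde S$ data and taking $g=\mathrm{Id}$, this shows the admissible localized events form a single $GL(5,\rset)$-orbit in $\rset^5_{AL}$; applying $\pi$, the admissible positioned events $\ep$ form a single orbit of $PGL(5,\rset)$ in $\rset^4_P$. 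Equivalently --- and this is the shortest route --- one reads the conclusion straight off \eqref{hombis}: the induced action on $\rset^4_P$ is by homographies $\ep'=(A\,\ep+b)/(c\,\ep+\mu)$ with $\rta_e$ not entering \eqref{epabis}, so the kernel of the action contains the scalars and $PGL(5,\rset)$ acts; since the data $A,b,c,\mu$ supply $24$ parameters against $4$ coordinates, this action is locally transitive on the affine chart $\rset^4_P=\rset^4$ of $\prq$.

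Third, I would assemble the local structure. The localization data are attached to the primary events $E_p,\be_p,\te_p,\he_p$ lying on the emitters' worldlines, and the a priori ambiguity coming from the two hemispheres of each celestial sphere $\cel_{E_p},\dots$ has already been eliminated by the equalities \eqref{taueight}, so that the identifications $\cel\simeq\prd$ are legitimate and a single horismos $\to$ contributes to each echoing system. Hence the transition between the emission-grid chart around one event and that around a neighbouring event is mediated by the $PGL(5,\rset)$-action just exhibited; and because the defining equations \eqref{k1234} are homogeneous polynomials, these transition maps are restrictions of genuine projective transformations of $\prq$. That is precisely what it means for the $(3+1)$-dimensional spacetime manifold to carry a local $4$-dimensional projective structure, inherited --- through the causal and horismotic relations used to build the echoing systems and through Lemma~\ref{lemauto31} --- from its causal structure.

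The step I expect to be the genuine obstacle is the equivariance claim of the middle paragraph: one must check that the $5\times5$ analogue of the tensor identity used for Theorem~\ref{pgl4} really does admit, for arbitrary $g_x,g_t$, a compensating $g$, i.e. that the extra coordinate and the pairing of the eight time coordinates through \eqref{taueight} do not over-determine the linear system solved for $g$. Once that is secured, everything else is a faithful transcription of the $(2+1)$-dimensional argument.
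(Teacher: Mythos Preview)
Your proposal is correct and follows exactly the route the paper takes: the paper's own proof is a one-liner stating that the argument is ``similar with the proof of Theorem~\ref{pgl4} but with the systems of homogeneous equations \eqref{kete31} or \eqref{kbehe31} or \eqref{k1234} instead of the system \eqref{ki},'' and you have simply unpacked that transcription in full detail. Your flagged concern about the equivariance step is legitimate but no more so than in the $(2+1)$-dimensional case, where the paper likewise asserts it without a detailed verification.
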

\begin{proof}
The proof is similar with the proof of Theorem~\ref{pgl4} but with the systems of homogeneous equations \eqref{kete31} or \eqref{kbehe31} or \eqref{k1234} instead of the system \eqref{ki}.
\end{proof}
\begin{remark}
The map $\sembb$ defined \emph{locally} by  $\mathfrak{M}_4$ on $\rset_P^4$ is the so-called `\emph{soldering map}' $\sembb$ of Ehresmann defined on $P\rset^4=\rset_P^4\cup P\rset^3$ to the spacetime manifold $\mathcal{M}$\,:
$$\xymatrix  @!=1ex {
P\rset^4\times P\rset^4 \ar[rrr] \ar[dd] &&& \ar[ddlll]^{\displaystyle\quad\sembb\underset{loc.}{\simeq}\mathfrak{M}_4} P\rset^4\\
&&&\\
\hspace{4em}\mathcal{M}\underset{loc.}{\simeq}P\rset^4&&&
}$$
And the set of \textit{homogeneous} equations \eqref{kete31} or \eqref{kbehe31} or \eqref{k1234} defines leafs in the trivial bundle $P\rset^4\times P\rset^4$. After reduction of the bundle $\rset^5\times\rset^5\to\mathcal{M}$ to this projective bundle, the \emph{projective Cartan connection} in the sense of \emph{Ehresmann} \cite{ehres49} is defined as the differential $dH\equiv dH^1\times dH^2\times dH^3\times dH^4$ with respect to the vertical variables $v\simeq(\tan\alpha,\tan\balpha,\tan\talpha,$ $\tan\halpha)$ and the horizontal variables $e_P$\,; and thus, the tangent spaces of these horizontal leafs are the \emph{annihilators/contact elements} of $dH$\,. 
\end{remark}
\begin{remark} 
\label{ehrescartan31}
As in the $(2+1)$-dimensional case 
(see Remark~\ref{ehrescartan}), the spacetime manifold can be considered as a `\emph{generalized Cartan space}' which is endowed with both 1) a `\emph{projective Cartan connection}' (of dimension 5) providing a \emph{local} projective structure, and 2) a compatible (pseudo-)Riemannian structure viewed for instance as a horizontal section in the five dimensional anisotropic grid.
\end{remark}
\section{Conclusion}
The previous results are obtained on manifolds of dimension less or equal to four and satisfying only causal axiomatics. This involves only the following assumptions:
\begin{itemize}
\item A finite speed of light from the existence of the horismotic relation `$\to$,'
\item Isotropy (compasses) because the conformal invariance is a common consequence of all the causal axiomatics (Malament's Theorem \cite{Malament77}, Woodhouse's axiomatics \cite{Woodhouse73}, King-Hawking-McCarty's axiomatics \cite{Kingal76} for instance; see also \cite{EhlPiSch:72}).
\item Homogeneity is void of meaning in causal axiomatics.
\end{itemize}
And then, we deduced that 
\begin{itemize}
\item the spacetime manifold has a \textit{local} projective structure in addition to the \textit{global} (pseudo-) \text{Riemannian} structure.
\item The spacetime manifold is a `\textit{generalized Cartan space}' with a `\textit{projective Cartan connection}' (see Remarks~\ref{ehrescartan} and \ref{ehrescartan31}). A forthcoming publication is planned to clarify these aspects.
\item The space and time coordinates are locally transformed by homographies. Indeed, the time stamps $(\tau^\alpha)=(\tau_e,\bta_e,\tta_e,\hta_e)$ $(\alpha=1,\ldots,4)$ are the `emission coordinates.' Then, they define a null co-frame $(d\tau^\alpha)$ also called `null GPS coordinates'  which are \textit{linearly} related to timelike (GPS) coordinates \cite{Rovelli:2002gps} such as the usual space and time coordinates $(u^\alpha)\equiv(x,y,z,t)$\,. Therefore, we obtain transformations similar to the transformations \eqref{epabis} for the space and time coordinates:
\begin{equation}\label{homogspacetime}
 u^\alpha=\left(\frac{U^\alpha_\beta\,u'^\beta+v^\alpha}{w_\mu\,u'^\mu+\rho}\right), \qquad\qquad \alpha\,,\,\beta\,,\,\mu=1,\ldots,4.
\end{equation}
\end{itemize}
Besides, applying by inquisitiveness these projective aspects in astrophysics, we consider a modification of the Newton's law of gravitation by a homography:
\begin{equation}
\vec{F}(\vec{r}_0)\equiv -\,G\,\frac{m_0\,m}{r_0^2}\,\left(\alpha+\beta\,t+\mu\,r_0\right)^2\,
\hat{r}_0\,,
\qquad
\hat{r}_0=\vec{r}_0/r_0\,,
\label{mondnew}
\end{equation}
where we consider that $\vec{F}$, the time $t$ and the radial distance $r_0$ between the punctual masses $m_0$ and $m$ are \textit{evaluated with respect to a frame attached to} $m$\,, and $\alpha$, $\beta$ and $\mu$ are constants.
\par
This modification differs from those investigated in MOND theories which satisfy the so-called \text{Milgrom's} law \cite{Famaey2012}. Also, contrarily to MOND theories, the present modification of Newton's law preserves the action/reaction principle. This modification is based on the notion of \textit{projective tensor} differing from the usual notion of \textit{Euclidean tensor}. 
\par 
We can quote {\'E}.~Cartan on this notion of \textit{Euclidean tensor} \cite[\S23, p.~22]{Cartan1981}. The latter can be considered as a set of numbers $(u^1,u^2,\ldots,u^r)$ brought into coincidence with an other set of numbers $(u'^1,u'^2,\ldots,u'^r)$ by a \textit{linear} transformation $S_o$ corresponding to a \textit{rotation} $R_o$ in a given ``underlying'' Euclidean space $\rset^k$\,. Then, the linear transformation $S_o$ corresponds to another transformation $R_o$ preserving the origin $o$ of $\rset^k$\,. Thus, we obtain tensors at this origin.
\par
Now, a \textit{projective tensor} can be considered as a set of numbers $(v^1,v^2,\ldots,v^r)$ brought into coincidence with an other set of numbers $(v'^1,v'^2,\ldots,v'^r)$ by a \textit{linear} transformation $S_o$ corresponding to an \textit{homography} $H_o$ preserving the origin $o$ ($H_o$ is then a \textit{central collineation} of center $o$) of a given ``underlying'' Euclidean space $\rset^k$\,.
In other words, the equivariance is defined for Euclidean tensors with respect to linear groups of transformations whereas the equivariance for projective tensors is defined with respect to the group of central collineations which is a subgroup of the projective group.
\par
Also,  if we have tensor fields, \ie, tensors at different origins $p$ elements of a manifold (as space of parameters), then, there corresponds fields (or families) of transformations $S_p$ on this manifold associated with fields of rotations $R_{o,p}$ (Euclidean tensor fields) or fields of central collineations $H_{o,p}$ (projective tensor fields) associated with the origin $o$ of the underlying Euclidean space $\rset^k$. Then, the tensor fields are equivariant if and only if the equivariance is satisfied at any point $p$\,. We recognize in this description the structure of a tensor bundle of rank $k$ of which the transition morphims (functions) are the rotation or the collineation fields, the transformation fields $S_p$ are defined from the local trivializations of the bundle and the origin $o$ is an element of the fiber. Moreover, the equivariance of tensor fields is obviously the so-called left-invariance with respect to right actions of structural groups.
\par\medskip
Then, we consider, first, the (non-modified) force of gravitation $\vec{F}$ as a Euclidean vector field with a spherical symmetry with respect to the point $p_0$ where the mass $m_0$ is located. The mass $m$ is at the point $p$ and, as indicated previously, $\vec{F}$ and $\vec{r}_0$ are evaluated with respect to a Euclidean frame attached to $p$\,. Then, clearly, rotating this frame does not change $r_0$ and it rotates in the same way the vector $\hat{r}_0$\,.
\par\medskip
Second, if $\vec{F}$ is modified to be a projective vector field with a spherical symmetry with respect to $p_0$\,, we must proceed as follows. In this projective framework, the central collineation fields $H_{o,p}$ are defined such as at each $p$ they are particular changes of projective frames $\F_p$\,. More precisely, we recall that the projective frames of a projective space of dimension four are defined by six vectors in a vector space of dimension five of which five are linearly independent. Then, projective transformations, \ie, homographies, are the linear, injective transformations in this five-dimensional vector space which is also called the space of \textit{homogeneous} coordinates. Then, central collineations are those projective transformations preserving --up to a multiplicative factor-- a particular, given five-dimensional vector, \ie, the origin $o$ of the five-dimensional fiber. In general, this vector is chosen among the vectors of a given projective frame.
Moreover, $p$ can be kept invariant with respect to central collineations which constitute a subgroup of the group of projective transformations. Indeed, these central collineations can also be viewed as local changes of \textit{inhomogeneous} coordinates centered at $p$\,, \ie, $p$ is the origin of the local system of coordinates. Hence, if $(x,y,z,t)$ are space and time coordinates centered at $p$ such that $p\equiv(0,0,0,0)=(x_p,y_p,z_p,t_p)$, then the changes of coordinates we must consider are given by the homographies  \eqref{homogspacetime}
where $v^\alpha=0$ ($\alpha=1,\ldots,4$), \ie, we have central collineations $H_{p,o}:(u'^\alpha)\equiv(x',y',z',t')\longrightarrow(u^\alpha)\equiv(x,y,z,t)$ such that
\begin{equation}\label{usimp}
u^\alpha=\frac{U^\alpha_\beta\, u'^\beta}{(q+h\,t'+\vec{k}\,.\,\vec{r'}\,)}\,,
\end{equation}
where $\vec{r'}\equiv(u'^i)\equiv(x',y',z')$ ($i,j,\ldots=1,2,3$)\,, $(U^\alpha_\beta)$ is a matrix field, $q$ and $h$ are scalar fields and $\vec{k}$ is a vector field all of them depending on $p$\,. 
\par
In particular, if the time and space splitting of the Newtonian physics is maintained in this change of coordinates, then we must have $U^i_4=U^4_i=0$\,. And then, we deduce in particular that
\begin{equation}\label{elatr}
r=\lambda\,\frac{r'}{(q+h\,t'+\vec{k}\,.\,\vec{r'}\,)}\,,
\qquad\qquad
t=\mu\,\frac{t'}{(q+h\,t'+\vec{k}\,.\,\vec{r'}\,)}\,.
\end{equation}
where $\lambda=\det(U^i_j)$ and $\mu=U^4_4$\,. Also, from \eqref{usimp}, considering that 1) $\vec{F}$ is a projective five-vector with two vanishing components, \ie, $\vec{F}\equiv(F^1,F^2,F^3,0,0)$, and 2) the central collineations are represented (or, are originated from) by the linear transformations $U\equiv(U^a_b)$ ($a,b=1,\ldots,5$) such that $U^a_5=0$\,, $U^5_i=k^i$ ($i=1,\ldots,3$), $U^5_4=h$ and $U^5_5=q$, then we obtain in particular $F^i\equiv\sum_{j=1}^{3} U^i_j\,F'^j$\,.
\par
Then, it is easy to see that \eqref{mondnew} becomes equivariant with respect to these changes of coordinates if and only if we set the necessary but, nevertheless, sufficient condition that the vector field $\vec{k}$ satisfies the relation: $\vec{k}\equiv\sigma\,\vec{r'_0}/r'_0$\,, where $\sigma$ is a scalar field depending on $p$ and $\vec{r'_0}$ is the vector from $p$ to $p_0$ in the new system of coordinates. Indeed, with this condition, we obtain the new Newtonian force:
\[
\vec{F'}(\vec{r'}_0)\equiv -\,G\,\frac{m_0\,m}{{r'}_0^2}\,\left(\alpha'+\beta'\,t'+\mu'\,r'_0\right)^2\,
\hat{r}'_0\,,
\qquad
\hat{r}'_0=\vec{r'}_0/r'_0\,.
\]
More precisely, the equivariance is obtained as soon as $\vec{r'}$ is equal to $\vec{r'_0}$, \ie, when we move along the line joining $p$ and $p_0$, and then, $\vec{F}$ is an equivariant, projective vector field along this line onto which only the Newton's law is experimentally evaluated. Also, we obtain the field of transformations $S_p$ as expected and a justification of the modification \eqref{mondnew} of the Newton's law of gravitation.
\par
Furthermore, $\vec{F}$ is then only equivariant with respect to a subgroup of the central collineations because we must set $\vec{k}\equiv\sigma\,\hat{r}'_0$\,. This is the result of 1) the central symmetry of $\vec{F}$\,, and 2) the Newtonian physics framework with the time and space splitting. Also, it can be noticed that in this Newtonian context, the terms such as $\beta\,t+\mu\,r_0$ in the expression of $\vec{F}$ may sound like a Minkowski inner product and could be the expression of a retarded Newtonian force as there exists retarded fields in electromagnetism.
\par 
In addition, we assume that the centripetal acceleration is a projective object and that it is modified in the same way as the Newton's law of gravitation: $v^2/r_0\longrightarrow  (\alpha+\beta\,t+\mu\,r_0)\,v^2/r_0$\,.
\par\medskip 
Then, if we modify the Newton's law of gravitation with a homography as indicated above preserving the mass distribution $\rho(r)$ to see the relative change between $\rho(r)$ and the radius $r$, then we can deduce the following rotational velocity field:
\begin{equation}\label{rvf}
v(r)\equiv\left(
(1+a\,t+b\,r)\,\frac{M(r)}{r}
\right)^{1/2}
\qquad\text{where}\qquad
M(r)\equiv\int_{0}^{r}u^k\,\rho(u)\,du\,.
\end{equation}
Then, whenever $t=1$ and for different mass distributions $\rho$, we obtain the following qualitative curves if we consider $k=2$ for spherical distributions of mass in \eqref{rvf} (Figure~\ref{curverot} and Table~\ref{tabcurvrot}):
\par\smallskip
\begin{figure}[ht]
\centering
\begin{tikzpicture}[domain=0:10,yscale=.6,xscale=.8]
	\draw[->] (-0.2,0.15) -- (10.2,0.15) node[right] {\large$r$};
	\draw[->] (0,-.2) -- (0,10.2) node[left] {$v$};
	\draw[color=black] plot[id=min,samples=300] function{10.8*sqrt((x*x*x < 1) ? x*x*x : 1)*sqrt((.3+.5*x)/x)} node[right] {$v_1(r)$};
	\draw[color=blue] plot[id=gauss,samples=100] function{7.2*sqrt(sqrt(pi)*erf(x)-2*x*exp(-x**2))*sqrt((.3+.5*x)/x)} node[right] {$v_2(r)$};

	\draw[color=green] plot[id=lorentz,samples=100] function{3.3*sqrt(x*sqrt(2)-atan(x*sqrt(2)))*sqrt((.3+.5*x)/x)} node[right] {$v_3(r)$};
	\draw[color=purple] plot[id=exp,samples=100] function{6.2*sqrt(2-exp(-4*x)*(16*x**2+8*x+2))*sqrt((.3+2*x)/(4*x))} node[right] {$v_4(r)$};
	\draw[color=brown] plot[id=poiss,samples=100] function{3.1*sqrt(6-exp(-3*x)*(27*x**3+27*x**2+18*x+6))*sqrt((.3+1.5*x)/(3*x))} node[right] {$v_5(r)$};
	\draw[color=violet] plot[id=mix,samples=300,domain=0:2.6,xscale=4,yscale=1.1] function{
	10*sqrt((1000*x*x*x < 1) ? 1000*x*x*x : 1)*sqrt((.3+5*x)/(10*x))
	+sqrt(6-exp(-3*x)*(27*x**3+27*x**2+18*x+6))*sqrt((.3+1.5*x)/(3*x))
	} node[right] at(10,10.2) {$v_6(r)$};
\end{tikzpicture}
\caption{\label{curverot}}
\end{figure}
\begin{table}[ht]
\centering
\begin{tabular}{clcc}
\hline\hline
Rotational velocity& Mass distribution&$a$&$b$\\ 
\hline 
$v_1$&\rule{0pt}{24pt}$
\rho_1(r)=\begin{cases}
3&\text{if $r\leq1$}\\
0&\text{if $r\geq1$}
\end{cases}$   &$-0.7$&$0.5$\\ 
$v_2$& $\rho_2(r)=e^{-r^2}$ (Gaussian)&$-0.7$&$0.5$\\ 
$v_3$&  $\rho_3(r)=\frac{1}{1+2\,r^2}$ (Lorentzian)&$-0.7$&$0.5$\\ 
$v_4$& $\rho_4(r)=e^{-r}$ (Exponential) &$-0.7$&$0.5$\\ 
$v_5$& $\rho_5(r)=r\,e^{-r}$ &$-0.7$&$0.5$\\ 
$v_6$& $\rho_6(r)=10000\,\rho_1(10\,r)+\rho_5(r)$ &$-0.7$&$5$\\ 
\hline 
\end{tabular}
\caption{\label{tabcurvrot}}
\end{table} 
Then, we see that the curves have a resemblance to the observed data.
This suggests more exhaustive studies of the relations between galactic mass densities and rotational velocity fields according to relations \eqref{rvf} with varying exponent $k$\,.
Moreover, the modified force $\vec{F}$ in \eqref{mondnew} depends on the time $t$ which could be related to a notion of cosmological expansion; a relation which could  be also studied in future works.
\par
Finally, a last question rises from these projective aspects: what could be the \textit{vanishing points} in such spacetime manifolds modeled locally by four-dimensional projective spaces? These vanishing points are at infinity in a projective space of dimension 3, and then, they appear to be points of congruences of timelike worldlines not necessarily crossing in the four-dimensional spacetime.  Hence, could this produce a sort of Big-Bang effect?
\section*{Conflicts of interest}
The author declares that there is no conflict of interest regarding the publication of this article.
\section*{Acknowledgments}
The author would like to thank Professor Ralf Hofmann from \emph{Karlsruhe Institute of Technology} (\mbox{Karlsruhe}, Germany), and Doctor Thierry Grandou from \emph{Institut de Physique de Nice} (Valbonne, France) for their invitation to participate at the `\emph{5th Winter Workshop on Non-Perturbative Quantum Field Theory}' which was held at Sophia-Antipolis (22-24 march 2017, France) to present the principles of relativistic localizing systems.
\bibliographystyle{ieeetr} 
\end{document}